\documentclass{article}
\usepackage{booktabs}
\usepackage{pgfplots}
\usepackage{subfigure}
\usepackage{calc}

\usepackage{xspace}
\usepackage{enumerate}

\usepackage[cmex10]{amsmath}
\usepackage{amsfonts,amssymb,amsthm,amstext,latexsym, dsfont}

\usepackage{algorithm, algorithmicx, algpseudocode}


\newcommand{\vtxt}[1]{\mathit{#1}}

\newcommand{\ema}[1]{\ensuremath{#1}}
\newcommand{\fff}{\ema{f}}
\newcommand{\impset}[1]{\mathds{#1}}
\newcommand{\N}[1]{\impset{N}^{#1}}

\newcommand{\R}[1]{\impset{R}^{#1}}

\newcommand{\mO}[1]{\mathcal{O}\left(#1\right)}

\newcommand{\ep}[1]{\mathbb{E}\left(#1\right)}
\newcommand{\et}[1]{E_t\left(#1\right)}

\newcommand{\norm}[1]{\left\| #1\right\|}
\newcommand{\fl}{fl}

\newcommand{\vtr}[1]{\mathbf{#1}}
\newcommand{\trans}[1]{#1^\intercal}

\newcommand{\np}{n^\prime}

\newcommand{\eps}{\varepsilon}

\DeclareMathOperator*{\rk}{rk}
\DeclareMathOperator*{\nnz}{nnz}
\DeclareMathOperator*{\argmin}{argmin}

\newcommand{\onlinechen}{\textsc{Online-Detec\-tion}\xspace}
\newcommand{\abftdetect}{\textsc{ABFT-Detec\-tion}\xspace}
\newcommand{\abftcorrect}{\textsc{ABFT-Correc\-tion}\xspace}

\algrenewcommand\algorithmicindent{2em}%

\algtext*{EndFor}
\algtext*{EndIf}
\algtext*{EndFunction}

\algrenewcommand\Return{\State \algorithmicreturn{} }%



\newtheorem{theorem}{Theorem}

\title{A Backward/Forward Recovery Approach
for the Preconditioned Conjugate Gradient Method}

\author{Massimiliano Fasi, 
Julien Langou,
Yves Robert, and
Bora U\c{c}ar}

\date{}

\begin{document}

\maketitle

\begin{abstract}
    Several recent papers have introduced a periodic verification mechanism to
    detect silent errors in iterative solvers. Chen [PPoPP'13,  pp. 167--176] has
    shown how to combine such a verification mechanism (a stability test checking
    the orthogonality of two vectors and recomputing the residual)
    with checkpointing: the idea is to verify every $d$ iterations, and to
    checkpoint every $c \times d$ iterations. When a silent error is detected by
    the verification mechanism, one can rollback to and re-execute from the last
    checkpoint.
    In this paper, we also propose to combine checkpointing and verification, but
    we use algorithm-based fault tolerance (ABFT) rather than stability tests. ABFT can be used for error detection,
    but also for error detection and correction, allowing a forward recovery (and
    no rollback nor re-execution) when a single error is detected. We introduce an
    abstract performance model to compute the performance of all schemes, and we
    instantiate it using the preconditioned conjugate gradient algorithm.
    Finally, we validate our new approach through a set of simulations.
\end{abstract}

\section{Introduction}

Silent errors (or silent data corruptions) have become a significant concern in
HPC environments~\cite{moody2010}.
There are many sources of silent errors, from bit flips in cache caused by
cosmic radiations, to wrong results produced by the arithmetic logic unit.
The latter source becomes relevant when the computation is performed
in the low voltage mode
to reduce the energy consumption in  large-scale computations.
But the low levels of voltage
dramatically reduces the dependability of the system.

The key problem with silent errors is the \emph{detection latency}:
when a silent error strikes, the corrupted data is not identified immediately,
but instead only when some numerical anomaly is detected in the application behavior.
It is clear that this detection can occur with an arbitrary delay.
As a consequence, the de facto standard method for resilience, namely checkpointing
and recovery, cannot be used directly.
Indeed, the method of checkpointing and recovery applies to fail-stop errors (e.g., hardware
crashes): such errors are detected immediately, and one can safely recover from
the last saved snapshot of the application state.
On the contrary, because of the detection latency induced by silent errors, it
is often impossible to know when the error struck, and hence to determine which
checkpoint (if any) is valid to safely restore the application
state.
Even if an unlimited number of checkpoints could be kept in memory, there would
remain the problem of identifying a valid one.

In the absence of a resilience method, the only known remedy to silent errors is
to re-execute the application from scratch as soon as a silent error is
detected.
On large-scale systems, the silent error rate grows linearly with the number of
components, and several silent errors are expected to strike during the
execution of a typical large-scale HPC application~\cite{cappello2009,FranckCappello11012009,cappello14-resilience,Schroeder:2007}.
The cost of re-executing the application one or more times becomes prohibitive,
and other approaches need to be considered.

Several recent papers have proposed to introduce a verification
mechanism to be applied periodically in order to detect silent errors.
These papers mostly target iterative methods to solve sparse linear systems,
which are natural candidates to periodic detection.
If we apply the verification mechanism every, say, $d$ iterations, then we have
the opportunity to detect the error earlier, namely at most $d-1$ iterations
after the actual faulty iteration, thereby stopping the progress of a flawed
execution much earlier than without detection.
However, the cost of the verification may be \mbox{non-negligible} in front of the cost
of one iteration of the application, hence the need to trade off for an adequate value
of $d$.
Verification can consist in testing the orthogonality of two vectors (cheap) or
recomputing the residual (cost of a sparse matrix-vector product, more
expensive).
We survey several verification mechanisms in Section~\ref{sec.related}.
Note that in all these approaches
a \emph{selective reliability} model is enforced, where
the parts of the application that are not protected are assumed to execute in a
reliable mode.

While verification mechanisms speed up the detection of silent errors, they
cannot provide correction, and thus they cannot avoid the re-execution of the
application from scratch.
A solution is to combine checkpointing with verification.
If we apply the verification mechanism every $d$ iterations, we can checkpoint
every $c \times d$ iterations, thereby limiting the amount of re-execution
considerably.
A checkpoint is always valid because it is being preceded by a verification.
If an error occurs, it will be detected by one of the $c$ verifications performed
before the next checkpoint.
This is exactly the approach proposed by Chen~\cite{Chen2013} for a variety of
methods based on Krylov subspaces, including the widely used conjugate Gradient
(CG) algorithm.
Chen~\cite{Chen2013} gives an equation for the overhead incurred by
checkpointing and verification, and determines the best values of $c$ and $d$ by
finding a numerical solution of the equation.
In fact, computing the optimal verification and checkpoint intervals is a hard
problem.
In the case of pure periodic checkpointing,  closed-form approximations of the
optimal period have been given by Young~\cite{young74} and Daly~\cite{daly04}.
However, when combining checkpointing and verification, the complexity of the
problem grows.
To the best of our knowledge, there is no known closed-form formula, although a
dynamic programming algorithm to compute the optimal repartition of checkpoints
and verifications is available~\cite{pmbs2014}.

For linear algebra kernels, another widely used technique for silent error
detection is algorithm-based fault tolerance (ABFT).
The pioneering paper of Huang and Abraham~\cite{abft-matrix-operations}
describes an algorithm capable of detecting and correcting a single silent error
striking a dense matrix-matrix multiplication by means of row and column
checksums.
ABFT protection has been successfully applied to dense LU~\cite{pengduppopp12}, LU with 
partial pivoting~\cite{Yao01112015}, Cholesky~\cite{Hakkarinen2015} and QR~\cite{Du2013}
factorizations, and more recently to sparse kernels like SpMxV (matrix-vector
product) and triangular solve~\cite{Shantharam2012}.
The overhead induced by ABFT is usually small, which makes it a good candidate
for error detection at each iteration of the CG algorithm.

The beauty of ABFT is that it can \emph{correct} errors in addition to detecting
them.
This comes at the price of an increased overhead, because several checksums are
needed to detect and correct, while a single checksum is enough when just
detection is required.
Still, being able to correct a silent error on the fly allows for \emph{forward
recovery}.
No rollback, recovery nor re-execution are needed when a single silent error is
detected at some iteration, because ABFT can correct it, and the execution can
be safely resumed from that very same iteration.
Only when two or more silent errors strike within an iteration we do need
to rollback to the last checkpoint.
In many practical situations, it is reasonable to expect no more than one error
per iteration, which means that most  roll-back operations can be avoided.
In turn, this leads to less frequent checkpoints, and hence less overhead.

The major contributions of this paper are an ABFT framework to detect multiple
errors striking the computation and a performance model that allows to compare
methods that combine verification and checkpointing.
The verification mechanism is capable of error detection, or of both error
detection and correction.
The model tries to determine the optimal intervals for verification and
checkpointing, given the cost of an iteration, the overhead associated to
verification, checkpoint and recovery, and the rate of silent errors.
Our abstract model provides the optimal answer to this question, as a function
of the cost of all application and resilience parameters.

We instantiate the model using a CG kernel, preconditioned with a sparse
approximate inverse~\cite{Benzi1998}, and compare the performance of two
\mbox{ABFT-based} verification mechanisms.
We call the first scheme, capable of error detection only, \abftdetect and
the second scheme, which enhances the first one by providing single error correction
as well, \abftcorrect.
Through numerical simulations, we compare the performance of both schemes with
\onlinechen, the approach of Chen~\cite{Chen2013} (which we extend to recover
from memory errors by checkpointing the sparse matrix in addition to the current
iteration vectors).
These simulations show that \abftcorrect outperforms both \onlinechen and
\abftdetect for a wide range of fault rates, thereby demonstrating that
combining checkpointing with ABFT correcting techniques is more efficient than
pure checkpointing for most practical situations.

Our discussion focuses on the sequential execution of iterative methods.
Yet, all our techniques extend to parallel implementation based on the message
passing paradigm (with using, e.g., MPI).
In an implementation of SpMxV in such a setting, the processing elements (or
processors) hold a part of the matrix and the input vector, and hold a part of
the output vector at the end.
A recent exposition of different algorithms can be found
elsewhere~\cite{kauc:14}. Typically, the processors perform scalar multiply operations on
the local matrix and the input vector elements, when all
required vector elements have been received from other processors.
The implementations of the MPI standard guarantees correct message delivery,
i.e., checksums are incorporated into the message so as to prevent transmission
errors (the receives can be done in-place and hence are protected).
However, the receiver will obviously get corrupted data if the sender sends
corrupted data.
Silent error can indeed strike at a given processor during local scalar multiply
operations.
Performing error detection and correction locally implies global error detection
and correction for the SpMxV. Note that, in this case, the local matrix elements
can form a matrix which cannot be assumed to be square in general (for some
iterative solvers they can be).
Furthermore, the mean time between failures (MTBF) reduces linearly with the
number of processors.
This is well-known for memoryless distributions of fault inter-arrival times and
remains true for arbitrary continuous distributions of finite
mean~\cite{aupy2014checkpointing}.
Therefore, resilient local matrix vector multiplies are required for resiliency in 
a parallel setting.

The rest of the paper is organized as follows.
Section~\ref{sec.related} provides an overview of related work.
Section~\ref{sec.CG} provides background on ABFT techniques for the PCG
algorithm, and presents both the \abftdetect and \abftcorrect approaches.
Section~\ref{sec.maths} is devoted to the abstract performance model.
Section~\ref{sec.experiments} reports numerical simulations comparing the
performance of \abftdetect, \abftcorrect and  \onlinechen.
Finally, we outline main conclusions and directions for future work in
Section~\ref{sec.conclusion}.

\section{Related work}
\label{sec.related}
We classify related work along the following topics: silent errors in general,
verification mechanisms for iterative methods, and ABFT techniques.

\subsection{Silent errors}
Considerable efforts have been directed at error-checking to reveal silent
errors.
Error detection is usually very costly. Hardware mechanisms, such as ECC memory,
can detect and even correct a fraction of errors, but in practice they are
complemented with software techniques. The simplest technique is triple modular
redundancy and voting~\cite{Lyons1962}, which induces a costly
verification.
For high-performance scientific applications, process replication (each process
is equipped with a replica, and messages are quadruplicated) is proposed in the
RedMPI library~\cite{Fiala2012}. Elliot et al.~\cite{Elliott2012} combine
partial redundancy and checkpointing, and confirm the benefit of dual and triple
redundancy.
The drawback is that twice the number of processing resources is required (for
dual redundancy).
A comprehensive list of general-purpose techniques and references is provided by
Lu et al.~\cite{LuZhengChien2013}.

Application-specific information can be very useful to enable ad-hoc solutions,
which dramatically decrease the cost of detection.
Many techniques have been advocated. They
include memory scrubbing~\cite{Hwang2012} and ABFT
techniques (see below).

As already stated, most papers assume on a selective reliability
setting~\cite{selective-reliability, selective-reliability-paper,
  selective-reliability-preprint, Sao2013}.
It essentially means that one can choose to perform any operation in reliable or
unreliable mode, assuming the former to be error-free but energy consuming and
the latter to be error-prone but preferable from an energy consumption point of view.

\subsection{Iterative methods}

Iterative methods offer a wide range of ad-hoc approaches.
For instance, instead of duplicating the computation, Benson et al.~\cite{BensonSS13}~suggest
coupling a higher-order with a lower-order scheme for PDEs.
Their method only detects an error but does not correct it.
Self-stabilizing corrections after error detection in the CG method are
investigated by Sao and Vuduc~\cite{Sao2013}.
Heroux and Hoemmen~\cite{Heroux2011} design a fault-tolerant GMRES capable of
converging despite silent errors.
Bronevetsky and de Supinski~\cite{Bronevetsky2008} provide a comparative study
of detection costs for iterative methods.

As already mentioned, a nice instantiation of the checkpoint and verification
mechanism that we study in this paper is provided by Chen~\cite{Chen2013},
who deals with sparse iterative solvers.
For PCG, the verification amounts to checking the orthogonality of two vectors
and to recomputing and checking the residual (see Section~\ref{sec.CG} for
further details).

As already mentioned, our abstract performance model is agnostic of the
underlying error-detection technique and takes the cost of verification as an
input parameter to the model.

\subsection{ABFT}

The very first idea of algorithm-based fault tolerance for linear algebra
kernels is given by Huang and Abraham~\cite{abft-matrix-operations}.
They describe an algorithm capable of detecting and correcting a single silent
error striking a matrix-matrix multiplication by means of row and column
checksums.
This germinal idea is then elaborated by Anfinson and
Luk~\cite{abft-linear-algebraic-model}, who propose a method to detect and
correct up to two errors in a matrix representation using just four column
checksums.
Despite its theoretical merit, the idea presented in their paper is actually
applicable only to relatively small matrices, and is hence out of our scope.
Bosilca et al.~\cite{Bosilca2009} and Du et al.~\cite{pengduppopp12} 
present two relatively recent survey.

The problem of algorithm-based fault-tolerance for sparse matrices is
investigated by Shantharam et al.~\cite{Shantharam2012}, who suggest a way to
detect a single error in an SpMxV at the cost of a few additional
\mbox{dot products}.
Sloan et al.~\cite{algorithmic-approach} suggest that this approach can be
relaxed using randomization schemes, and propose several checksumming techniques
for sparse matrices. These techniques are less effective than the previous ones,
not being able to protect the computation from faults striking the memory, but
provide an interesting theoretical insight.

\section{CG-ABFT}
\label{sec.CG}

We streamline our discussion on the CG method, however, the techniques that we
describe are applicable to any iterative solver that use sparse matrix vector
multiplies and vector operations.
This list includes many of the non-stationary iterative solvers such as CGNE,
BiCG, BiCGstab where sparse matrix transpose vector multiply operations also
take place.
In particular, we consider a PCG variant where the application of the
preconditioner reduces to the computation of two SpMxV with triangular
matrices~\cite{Benzi1998}, which are a sparse factorization of an approximate
inverse of the coefficient matrix.
In fact, the model discussed in this paper can be profitably employed for any
sparse inverse preconditioner that can be applied by means of one or more SpMxV.

We first provide a background on the CG method and
overview both Chen's stability tests~\cite{Chen2013} and our ABFT protection
schemes.

\begin{algorithm}
\begin{small}
  \begin{algorithmic}[1]
    \Require{$\vtr{A},\vtr{M} \in \R{n \times n}, \vtr{b}, \vtr{x}_0 \in \R{n}, \eps \in \R{}$}
    \Ensure{$\vtr{x} \in \R{n} \;:\; \| \vtr{A}\vtr{x} - \vtr{b} \| \leq \eps$}

    \State{$\vtr{r}_0 \leftarrow \vtr{b} - \vtr{A}\vtr{x}_0$};
    \State{$\vtr{z}_0 \leftarrow \trans{\vtr{M}}\vtr{M}\vtr{r}_0$};
    \State{$\vtr{p}_0 \leftarrow \vtr{z}_0$};
    \State{$i \leftarrow 0$};

    \While{$\norm{\vtr{r}_i} > \eps \left( \norm{\vtr{A}} \cdot \norm{\vtr{r}_0} + \norm{b}\right)$}
    \State{$\vtr{q} \leftarrow \vtr{A}\vtr{p}_i$};
    \State{$\alpha_i \leftarrow
      \norm{\vtr{r}_i}^2 / \trans{\vtr{p}}_i\vtr{q}$};
    \State{$\vtr{x}_{i + 1} \leftarrow \vtr{x}_i + \alpha\,\vtr{p}_i$};
    \State{$\vtr{r}_{i+1} \leftarrow \vtr{r}_i - \alpha\,\vtr{q}$};
    \State{$\vtr{z}_{i+1} \leftarrow \trans{\vtr{M}}\vtr{M}\vtr{r}_{i+1}$};
    \State{$\beta \leftarrow
      \norm{\vtr{r}_{i + 1}}^2 / \norm{\vtr{r}_{i}}^2$};
    \State{$\vtr{p}_{i + 1} \leftarrow \vtr{z}_{i + 1} + \beta\,\vtr{p}_i$};
    \State{$i \leftarrow i + 1$};
    \EndWhile
    \Return{$\vtr{x}_i$};
  \end{algorithmic}
  \caption{The PCG algorithm.}
  \label{alg:cg}
\end{small}
\end{algorithm}


The code for the variant of the PCG method we use is shown in
Algorithm~\ref{alg:cg}.
The main loop features three sparse matrix-vector
multiply, two inner products (for ${\trans{\vtr{p}}_i\vtr{q}}$ and
$\norm{\vtr{r}_{i + 1}}^2$), and three vector operations of the form $axpy$.

Chen's stability tests~\cite{Chen2013} amount to checking the orthogonality of
vectors $\vtr{p}_{i+1}$ and $\vtr{q}$, at the price of computing
$\frac{\trans{\vtr{p}}_{i+1}\vtr{q}}{\norm{\vtr{p}_{i+1}} \norm{\vtr{q}_{i}}}$,
and to checking the residual
at the price of an additional SpMxV operation $\vtr{A}\vtr{x}_{i}-\vtr{b}$.
The dominant cost of these verifications is the additional SpMxV operation.

Our only modification to Chen's original approach is that we also save the
sparse matrix $\vtr{A}$ in addition to the current iteration vectors.
This is needed when a silent error is detected: if this error comes for a
corruption in data memory, we need to recover with a valid copy of the data
matrix $\vtr{A}$.
This holds for the three methods under study, \onlinechen, \abftdetect and
\abftcorrect, which have exactly the same checkpoint cost.

We now give an overview of our own protection and verification mechanisms.
We use ABFT techniques to protect the SpMxV,  its computations (hence the vector
$\vtr{q}$), the matrix $\vtr{A}$ and the input vector $\vtr{p}_i$.
Since ABFT protection for vector operations is as costly as repeated computation,
we use triple modular redundancy (TMR) for them for simplicity.

Although theoretically possible, constructing ABFT mechanism to detect up to $k$
errors is practically not feasible for $k>2$.
The same mechanism can be used to correct up to $\lfloor k/2\rfloor$.
Therefore, we focus on detecting up to two errors and correcting the error if
there was only one.
That is, we detect up to two errors in the computation $\vtr{q}\leftarrow \vtr{A}\vtr{p}_i$
(two entries in $\vtr{q}$ are faulty), or in $\vtr{p}_i$, or in the sparse
representation of the matrix $\vtr{A}$.
With TMR, we assume that the errors in the computation are not overly frequent
so that two out of three are correct (we assume errors do not strike the vector
data here).
Our fault-tolerant PCG versions thus have the following ingredients: ABFT to
detect up to two errors in the SpMxV and correct the error, if there was only
one; TMR for vector operations; and checkpoint and roll-back in case errors are
not correctable.

We assume the selective reliability model in which all checksums and
checksum related operations are non-faulty, also the tests for the orthogonality
checks are non-faulty.

\section{ABFT-SpMxV}
\label{s:abftspmxv}
Here, we discuss the proposed ABFT method for the SpMxV
(combining ABFT with checkpointing is described later in Section~\ref{sec.model.CG}).
The proposed ABFT mechanisms are described for detecting single errors (Section~\ref{sec.single.err}), 
multiple errors (Section~\ref{sec.multiple.err}), and correcting a single error (Section~\ref{sec.err.correct}).

\subsection{Single error detection\label{sec.single.err}}

The overhead of the standard single error correcting ABFT technique is too high
for the sparse matrix-vector product case.
Shantharam et al.~\cite{Shantharam2012} propose a cheaper ABFT-SpMxV algorithm
that guarantees the detection of a single error striking either the computation
or the memory representation of the two input operands (matrix and vector).
Because their results depend on the sparse storage format adopted, throughout the
paper we will assume that sparse matrices are stored in the compressed storage
format by rows (CSR), that is by means of three distinct arrays, namely
$\vtxt{Colid} \in \N{\nnz(\vtr{A})}$, $\vtxt{Val} \in \R{\nnz(\vtr{A})}$ and
$\vtxt{Rowidx} \in \N{n + 1}$~\cite[Sec. 3.4]{saad}). Here $\nnz(\vtr{A})$
is the number of non-zero entries in $\vtr{A}$.

Shantharam et al. can protect $\vtr{y} \leftarrow \vtr{A} \vtr{x}$,
where $\vtr{A} \in \R {n \times n}$ and $\vtr{x,y} \in \R{n}$.
To perform error detection, they rely on a column checksum vector $\vtr{c}$
defined by
\begin{equation}
  c_j = \sum_{i = 0}^{n} a_{i, j}
  \label{eq:cj}
\end{equation}
and an auxiliary copy $\vtr{x}^\prime$ of the $\vtr{x}$ vector.
After having performed the actual SpMxV, to validate the result, it suffices to
compute $\sum_{i=1}^{n} y_i$, $\trans{\vtr{c}}\vtr{x}$ and
$\trans{\vtr{c}}\vtr{x}^\prime$, and to compare their values.
It can be shown~\cite{Shantharam2012} that in case of no errors, these three
quantities carry the same value, whereas if a single error strikes either the
memory or the computation, one of them must differ from the other two.
Nevertheless, this method requires $\vtr{A}$ to be strictly diagonally dominant,
a condition that seems to restrict too much the practical applicability of their
method.
Shantharam et al.~need this condition to ensure detection of errors striking
an entry of $\vtr{x}$ corresponding to a zero checksum column of $\vtr{A}$.
We further analyze that case and show how to overcome the issue without imposing
any restriction on $\vtr{A}$.

A nice way to characterize the problem is expressing it in geometrical terms.
Consider the computation of a single entry of the checksum as
\begin{equation*}
  (\trans{\vtr{w}}\vtr{A})_j = \sum_{i=1}^{n} w_i a_{i,j} =
  \trans{\vtr{w}}\vtr{A}^j,
\end{equation*}
where $\vtr{w} \in \R{n}$ denotes the weight vector and $\vtr{A}^j$ the $j$-th
column of $\vtr{A}$.
Let us now interpret such an operation as the result of the scalar product
$
\left\langle \cdot, \cdot \right\rangle : \R{n} \times \R{n} \rightarrow \R{}
$
defined by
$
\left\langle \vtr{u}, \vtr{v} \right\rangle \mapsto \trans{\vtr{u}}\vtr{v}.
$
It is clear that a checksum entry is zero if and only if the corresponding
column of the matrix is orthogonal to the weight vector.
In~\eqref{eq:cj}, we have chosen $\vtr{w}$ to be such that $w_i = 1$ for
$1 \leq i \leq n$, in order to make the computation easier.
Let us see now what happens without this restriction.

The problem reduces to finding a vector $\vtr{w} \in \R{n}$ that is not
orthogonal to any vector out of a basis
$\mathcal{B} = \left\lbrace \vtr{b}_1, \dots, \vtr{b}_n \right\rbrace$ of $\R{n}$
-- the rows of the input matrix.
Each of these $n$ vectors is perpendicular to a hyperplane $h_i$ of $\R{n}$,
and $\vtr{w}$ does not verify the condition
\begin{equation}
  \label{eq:orth-condition}
    \left\langle \vtr{w}, \vtr{b}_i \right\rangle \neq 0,
\end{equation}
for any $i$, if and only if it lies on $h_i$. Since the Lebesgue measure in $\R{n}$
of an hyperplane of $\R{n}$ itself is zero, the union of these hyperplanes is
measurable and has measure 0.
Therefore, the probability that a vector $\vtr{w}$ randomly picked in $\R{n}$
does not satisfy condition~\eqref{eq:orth-condition} for any $i$ is zero.

Nevertheless, there are many reasons to consider zero checksum columns.
First of all, when working with finite precision, the number of elements in
$\R{n}$ one can have is finite, and the probability of randomly picking a vector
that is orthogonal to a given one could be larger than zero.
Moreover, a coefficient matrix usually comes from the discretization of a
physical problem, and the distribution of its columns cannot be considered as
random.
Finally, using a randomly chosen vector instead of $\trans{(1, \dots, 1)}$
increases the number of required floating point operations, causing a growth of
both the execution time and the number of rounding errors
(see Section~\ref{sec.experiments}).
Therefore, we would like to keep \mbox{$\vtr{w}=\trans{(1, \dots, 1)}$} as the
vector of choice, in which case we need to protect SpMxV with matrices having
zero column sums.
There are many matrices with this property, for example the Laplacian matrices
of graphs~\cite[Ch.~1]{Chung1997}.

\begin{algorithm}
\begin{small}
\begin{algorithmic}[1]
\Require{$\vtr{A} \in \R {n \times n}$,  $\vtr{x} \in \R{n}$}
\Ensure{$\vtr{y} \in \R{n}$ such that $\vtr{y} = \vtr{A} \vtr{x}$ or the detection of a single error}

\State{$\vtr{x}^\prime \leftarrow \vtr{x}$};
\State{$[\vtr{\underline{w}}, \vtr{c}, k, c_{r}] =$} \Call{computeChecksum}{$\vtr{A}$};
\Return{\Call{SpMxV}{$\vtr{A}$, $\vtr{x}$, $\vtr{x}^\prime$, $\vtr{\underline{w}}$, $\vtr{c}$, $k$, $c_{r}$}};

\Statex

\Function{computeChecksum}{$\vtr{A}$}
	\State{Generate $\vtr{\underline{w}} \in \R{n+1}$};
	\State{$\vtr{w} \leftarrow \vtr{\underline{w}}_{1:n}$};
	\State{$\vtr{c} \leftarrow \trans{\vtr{w}} \vtr{A}$};
	\If{$\min(\mid \vtr{c} \mid) = 0$};
		\State{Find $k$ that verifies~\eqref{eq:shift-condition}};
		\State {$\vtr{c} \leftarrow \vtr{c} + k$};
	\EndIf
	\State{$c_{\vtxt{r}} \leftarrow \trans{\vtr{\underline{w}}}\vtxt{Rowindex} $};
	\Return{$\vtr{\underline{w}}, \vtr{c}, k, c_{r}$};
\EndFunction

\Statex

\Function{SpMxV}{$\vtr{A}$, $\vtr{x}$, $\vtr{x}^\prime$, $\vtr{\underline{w}}$, $\vtr{c}$, $k$, $c_{r}$}
	\State{$\vtr{w} \leftarrow \vtr{\underline{w}}_{1:n}$};
	\State{$s_r \leftarrow 0$};
	\For {$i\leftarrow1$ to $n$}
		\State{$y_i \leftarrow 0$};
		\State{$s_r \leftarrow s_r + \vtxt{Rowindex}_i$};
		\For {$j\leftarrow\vtxt{Rowindex}_i$ to $\vtxt{Rowindex}_{i+1} - 1$}
			\State{$ind \leftarrow \vtxt{Colid}_j$};
			\State{$y_i \leftarrow y_i + \vtxt{Val}_j \cdot x_{ind}$};
		\EndFor
	\EndFor
	\State{$y_{n+1} \leftarrow k\; \trans{\vtr{w}}\vtr{x}^\prime$};
	\State{$c_{y} \leftarrow  \trans{\vtr{\underline{w}}} \vtr{y}$};
        \State{$d_{x} \leftarrow  \trans{\vtr{c}} \vtr{x}$};
        \State{$d_{x^\prime} \leftarrow  \trans{\vtr{c}} \vtr{x}^\prime$};
        \State{$d_{r} \leftarrow c_r - s_r$};
	\If{$d_x = 0 \wedge d_{x^\prime} = 0 \wedge d_r = 0$}
		\Return{$\vtr{y}_{1:n}$};
	\Else
		\State{\textbf{error} (``Soft error is detected")};
	\EndIf
\EndFunction
\end{algorithmic}
\caption{Shifting checksum algorithm.}
\label{alg:ssr-shift}
\end{small}
\end{algorithm}

In Algorithm~\ref{alg:ssr-shift}, we propose an ABFT SpMxV method that uses
weighted checksums and does not require the matrix to be strictly diagonally
dominant.
The idea is to compute the checksum vector and then shift it by adding to all
entries a constant value chosen so that all elements of the new
vector are different from zero.
We give the generalized result in Theorem~\ref{th:ssr-shift}.

\begin{theorem}[Correctness of Algorithm~\ref{alg:ssr-shift}]
\label{th:ssr-shift}
Let $\vtr{A} \in \R{n \times n}$ be a square matrix, let
$\vtr{x}, \vtr{y} \in \R{n}$ be the input and output vector respectively, and
let $\vtr{x}^\prime = \vtr{x}$.
Let us assume that the algorithm performs the computation
\begin{equation}
\label{eq:encoded-spmxv-mod}
\widetilde{\vtr{y}} \leftarrow \widetilde{\vtr{A}} \widetilde{\vtr{x}},
\end{equation}
where $\widetilde{\vtr{A}} \in \R{n \times n}$ and $\widetilde{\vtr{x}} \in \R{n}$
are the possibly faulty representations of $\vtr{A}$ and $\vtr{x}$ respectively,
while $\widetilde{\vtr{y}} \in \R{n}$ is the possibly erroneous result of the
sparse matrix-vector product. Let us also assume that the encoding scheme relies
on
\begin{enumerate}
\item an auxiliary checksum vector
\begin{equation*}
  \vtr{c} = \left[\sum_{i=1}^{n}a_{i,1}+k,\ldots,\sum_{i=1}^{n}a_{i,n}+k\right],
\end{equation*}
where $k$ is such that
\begin{equation}
\label{eq:shift-condition}
c_j=\sum_{i=1}^{n}a_{i,j} + k \neq 0,
\end{equation}
 for $1 \le j \le n$,
\item an auxiliary checksum $y_{n+1} = k \sum_{i=i}^{n}\widetilde{x}_i$,
\item an auxiliary counter $s_r$ initialized to 0 and updated at
runtime by adding the value of the hit element each time the $\vtxt{Rowindex}$
array is accessed (line 20 of Algorithm~\ref{alg:ssr-shift}),
\item an auxiliary checksum $c_{r}=\sum_{i=1}^{n}\vtxt{Rowindex}_i\in\N{}$.
\end{enumerate}
Then, a single error in the computation of the SpMxV causes one of the following
conditions to fail:
\begin{enumerate}[i.]
\item $\trans{\vtr{c}}\widetilde{\vtr{x}} = \sum_{i=1}^{n+1} \widetilde{y}_i$,
\label{it:ssr-shift-i}
\item $\trans{\vtr{c}}\vtr{x}^\prime = \sum_{i=1}^{n+1} \widetilde{y}_i$,
\label{it:ssr-shift-ii}
\item $s_r$ = $c_{r}$.
\label{it:ssr-shift-iii}
\end{enumerate}
\end{theorem}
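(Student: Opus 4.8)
The plan is to verify that the three tested identities hold in the absence of errors, and then to show, by a case analysis on the location of a single fault, that at least one of them is violated. The unifying observation is the checksum invariant. Writing $\vtr{1}=\trans{(1,\dots,1)}\in\R{n}$ and recalling $c_j=\sum_{i=1}^n a_{i,j}+k$, one has $\trans{\vtr{1}}\vtr{A}=\trans{\vtr{c}}-k\trans{\vtr{1}}$, so that for any $\vtr{v}\in\R{n}$ the quantity $\sum_{i=1}^n(\vtr{A}\vtr{v})_i+k\sum_{j=1}^n v_j$ equals $\trans{\vtr{c}}\vtr{v}$. With $y_{n+1}=k\sum_j \widetilde{x}_j$ this is exactly the statement that, when the product is computed correctly, $\sum_{i=1}^{n+1}\widetilde{y}_i=\trans{\vtr{c}}\widetilde{\vtr{x}}$; since $\vtr{x}'=\vtr{x}=\widetilde{\vtr{x}}$ in the fault-free case, identities~\ref{it:ssr-shift-i} and~\ref{it:ssr-shift-ii} hold, and identity~\ref{it:ssr-shift-iii} holds because $s_r$ accumulates exactly the entries summed by $c_r$.

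First I would dispose of all faults that perturb the output while leaving the input vector intact, namely a corruption of a single $\widetilde{y}_\ell$, of one stored value $\vtxt{Val}_j$, or of one index $\vtxt{Colid}_j$. Each of these affects only one output entry (the row containing position $j$) or the auxiliary entry $y_{n+1}$, and therefore changes $\sum_{i=1}^{n+1}\widetilde{y}_i$ by a nonzero amount unless the fault is benign, i.e.\ it does not alter the computed result at all, as when a changed value multiplies a zero entry of $\vtr{x}$. Because here $\widetilde{\vtr{x}}=\vtr{x}'$, the left-hand sides of~\ref{it:ssr-shift-i} and~\ref{it:ssr-shift-ii} retain their fault-free value $\trans{\vtr{c}}\vtr{x}$, so both identities fail. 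The one algebraic step worth isolating is that a single relabelling $\vtxt{Colid}_j:q\mapsto q'$ shifts the output by $a_{p,q}(x_{q'}-x_q)$, which is nonzero precisely when the result actually changes.

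The delicate case, and the reason for introducing the shift $k$, is a fault in the input vector, $\widetilde{x}_\ell=x_\ell+\delta$ with $\delta\neq 0$, while the protected copy $\vtr{x}'$ stays correct. Here I would compute both sides of~\ref{it:ssr-shift-ii}: the right-hand side becomes $\sum_{i=1}^{n+1}y_i+\delta\bigl(\sum_{i=1}^n a_{i,\ell}+k\bigr)=\sum_{i=1}^{n+1}y_i+\delta c_\ell$, whereas the left-hand side $\trans{\vtr{c}}\vtr{x}'$ keeps its fault-free value $\sum_{i=1}^{n+1}y_i$. The two differ by $\delta c_\ell$, which is nonzero exactly because the shift condition~\eqref{eq:shift-condition} guarantees $c_\ell\neq 0$; this is what repairs the gap in the method of Shantharam et al.\ for zero-checksum columns. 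I expect this to be the main obstacle, since one must also check that identity~\ref{it:ssr-shift-i} is genuinely blind to this fault (both of its sides use the corrupted $\widetilde{\vtr{x}}$ and hence still agree), so that detection rests entirely on the pristine copy in~\ref{it:ssr-shift-ii}.

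Finally I would treat a corruption of the $\vtxt{Rowindex}$ array. A faulty entry read during the sweep is accumulated into $s_r$ with its corrupted value, so $s_r$ differs from the reliably precomputed $c_r$ by the same nonzero amount, and identity~\ref{it:ssr-shift-iii} fails directly, independently of how the altered loop bounds disturb the product. Collecting the cases, every single fault violates at least one of~\ref{it:ssr-shift-i}--\ref{it:ssr-shift-iii}, which is the claim. Throughout I would lean on the selective-reliability assumption stated before the theorem, so that $\vtr{c}$, $c_r$ and the comparison tests are themselves fault-free and can serve as trusted references.
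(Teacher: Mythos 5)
Your proposal is correct and follows essentially the same route as the paper's proof: a case analysis on the fault location (computed entries of $\vtr{y}$ together with $\vtxt{Val}$ and $\vtxt{Colid}$, then the input vector $\vtr{x}$, then $\vtxt{Rowindex}$), using the sum-checksum mismatch to violate condition~i, the shift condition $c_\ell \neq 0$ to violate condition~ii for faults in $\vtr{x}$, and the $s_r$ versus $c_r$ comparison for faults in $\vtxt{Rowindex}$. The only differences are presentational refinements on your side (stating the fault-free invariant up front, noting the benign-fault caveat for values multiplying zero entries, and observing explicitly that condition~i is blind to faults in $\widetilde{\vtr{x}}$), which the paper leaves implicit.
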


\begin{proof}
We will consider three possible cases, namely
\begin{enumerate}[a.]
\item a faulty arithmetic operation during the computation of $\vtr{y}$,
\label{it:ssr-shift-a}
\item a bit flip in the sparse representation of $\vtr{A}$,
\label{it:ssr-shift-b}
\item a bit flip in an element of of $\vtr{x}$.
\label{it:ssr-shift-c}
\end{enumerate}

\textbf{Case~\ref{it:ssr-shift-a}.} Let us assume, without loss of generality,
that the error has struck at the $p$th position of $\vtr{y}$, which
implies $\widetilde{y}_i = y_i$ for $1 \leq i \leq n$ with $i \neq p$ and
$\widetilde{y}_p = y_p + \eps$, where
$\eps \in \R{} \setminus \left\lbrace 0 \right\rbrace$ represents the value of
the error that has occurred.
Summing up the elements of $\widetilde{\vtr{y}}$ gives
\begin{eqnarray}
\sum_{i=1}^{n+1} \widetilde{y}_i
&=& \sum_{i=1}^{n} \sum_{j=1}^{n} a_{i,j} \widetilde{x}_j +
	k \sum_{j=1}^{n} \widetilde{x}_j + \eps \nonumber \nonumber \\
&=& \sum_{j=1}^{n} c_j \widetilde{x}_j + \eps \nonumber \nonumber \\
&=& \trans{\vtr{c}}\widetilde{\vtr{x}} + \eps, \nonumber \nonumber
\end{eqnarray}
that violates condition~\eqref{it:ssr-shift-i}.

\textbf{Case~\ref{it:ssr-shift-b}.} A single error in the $\vtr{A}$ matrix can
strike one of the three vectors that constitute its sparse representation:
\begin{itemize}

\item a fault in $\vtxt{Val}$ that alters the value of an element $a_{i,j}$
implies an error in the computation of $\widetilde{y}_{i}$, which leads to the
violation of the safety condition~\eqref{it:ssr-shift-i} because
of~\eqref{it:ssr-shift-a},

\item a variation in $\vtxt{Colid}$ can zero out an element in position
$a_{i,j}$ shifting its value in position $a_{i,j^\prime}$, leading again to an
erroneous computation of $\widetilde{y}_{i}$,

\item a transient fault in $\vtxt{Rowindex}$ entails an incorrect value of
$s_r$ and hence a violation of condition \eqref{it:ssr-shift-iii}.
\end{itemize}

\textbf{Case~\ref{it:ssr-shift-c}.} Let us assume, without loss of generality,
an error in position $p$ of $\vtr{x}$. Hence we have that
$\widetilde{x}_i = x_i$ for $1 \leq i \leq n$ with $i \neq p$ and
$\widetilde{x}_p = x_p + \eps$, for some $\eps\in\R{}\setminus\lbrace 0\rbrace$.
Noting that $\vtr{x} = \vtr{x}^\prime$, the sum of the elements of
$\widetilde{\vtr{y}}$ gives
\begin{eqnarray}
\sum_{i=1}^{n+1} \widetilde{y}_i
&=& \sum_{i=1}^{n} \sum_{j=1}^{n} a_{i,j}\widetilde{x}_j + k \sum_{j=1}^{n}\widetilde{x}_j \nonumber \\
&=& \sum_{i=1}^{n} \sum_{j=1}^{n} a_{i,j}x_j + k \sum_{j=1}^{n}x_j +
\eps \sum_{i=1}^{n} a_{i,p} + \eps k \nonumber \\
&=& \sum_{j=1}^{n} c_j x_j +
\eps\left( \sum_{i=1}^{n} a_{i,p} + k \right) \nonumber \\
&=& \trans{\vtr{c}} \vtr{x}^\prime +
\eps\left( \sum_{i=1}^{n} a_{i,p} + k \right), \nonumber
\end{eqnarray}
that violates \eqref{it:ssr-shift-ii} since $\sum_{i=1}^{n} a_{i,p} + k \neq 0$
by definition of $k$.
\end{proof}

Let us remark that $\textsc{computeChecksum}$ in Algorithm~\ref{alg:ssr-shift}
does not require the input vector $\vtr{x}$ of SpMxV as an argument. Therefore,
in the common scenario of many SpMxV with the same matrix, it is enough to invoke
it once to protect several \mbox{matrix-vector} multiplications.
This observation will be crucial when discussing the performance of these
checksumming techniques.

Shifting the sum checksum vector by an amount
is probably the simplest deterministic  approach to relax the strictly
diagonal dominance hypothesis, but it is not the only one.
An alternative solution is described in Algorithm~\ref{alg:ssr-split},
which basically exploits the distributive property of matrix multiplication over
matrix addition.
The idea is to split the original matrix $\vtr{A}$ into two matrices of the same
size, $\vtr{A}$ and $\widehat{\vtr{A}}$, such that no column of either matrix
has a zero checksum.
Two standard ABFT multiplications, namely $\vtr{y} \leftarrow \vtr{A} \vtr{x}$
and $\widehat{\vtr{y}} \leftarrow \vtr{A} \widehat{\vtr{x}}$, are then
performed.
If no error occurs neither in the first nor in the second computation, the sum
of $\vtr{y}$ and $\widehat{\vtr{y}}$ is computed in reliable mode and then
returned.
Let us note that, as we expect the number of non-zeros of $\widehat{\vtr{A}}$ to
be much smaller than $n$, we store sparsely both the checksum vector of
$\widehat{\vtr{A}}$ and the $\widehat{\vtr{y}}$ vector.

We do not write down an extended proof of the correctness of this algorithm, and
limit ourselves to a short sketch.
We consider the same three cases as in the proof of
Theorem~\ref{th:ssr-shift}, without introducing any new idea.
An error in the computation of $\vtr{y}$ or $\widehat{\vtr{y}}$ can be detected
using the \mbox{dot product} between the corresponding column checksum and the
$\vtr{x}$ error. An error in $\vtr{A}$ can be detected by either $c_{r}$
or an erroneous entry in $\vtr{y}$ or $\widehat{\vtr{y}}$, as the matrix loop
structure of the sparse multiplication algorithm has not been changed.
Finally, an error in the $p$th component of $\vtr{x}$ would make the sum of the
entries of $\vtr{y}$ and $\widehat{\vtr{y}}$ differ from
$\trans{\vtr{c}}\vtr{x}^\prime$ and $\trans{\widehat{\vtr{c}}}\vtr{x}^\prime$,
respectively.

\begin{algorithm}
\begin{small}
\begin{algorithmic}[1]
\Require{$\vtr{A} \in \R {n \times n}$,
  $\vtr{x} \in \R{n}$}
\Ensure{$\vtr{y} \in \R{n}$ such that $\vtr{y} = \vtr{A} \vtr{x}$
  or the detection of a single error}

\State{$[\vtr{c}, k, c_{r}] =$} \Call{computeChecksum}{$\vtr{A}$};
\Return{\Call{SpMxV}{$\vtr{A}$, $\vtr{x}$, $\vtr{c}$, $\widehat{\vtr{c}}$, $\vtr{b}$, $c_{r}$}};

\Statex

\Function{computeChecksum}{$\vtxt{A}$}
	\State{$\vtr{c}, \vtr{m} \leftarrow 0$};
	\For {$i\leftarrow1$ to $\nnz(\vtr{A})$}
		\State{$ind \leftarrow \vtxt{Colid}_i$};
		\State{$c_{ind} \leftarrow c_{ind} + \vtxt{Val}_i$};
		\State{$m_{ind} \leftarrow i$};
	\EndFor
	\State{$k \leftarrow 0$};
	\For{$i \leftarrow 1$ to $n$}
		\If{$c_i = 0 \wedge m_i \neq 0$}
			\State{$b_{m_i} \leftarrow \textbf{true}$};
			\State{$\widehat{c}_i \leftarrow \vtxt{Val}_{m_i}$};
			\State{$c_i \leftarrow c_i - \widehat{c}_i$};
		\EndIf
	\EndFor
	\State{$c_r \leftarrow \sum_{i=1}^{n} \vtxt{Rowindex}_i $};
	\Return{$\vtr{c}, \widehat{\vtr{c}}, \vtr{b}, c_r$};
\EndFunction

\Statex

\Function{SpMxV}{$\vtr{A}$, $\vtr{x}$, $\vtr{c}$, $\widehat{\vtr{c}}$, $\vtr{b}$, $c_r$}
	\State{$\vtr{x}^\prime \leftarrow \vtr{x}$};
	\For {$i\leftarrow1$ to $n$}
		\State{$y_i \leftarrow 0$};
	\EndFor
	\State{$s_r \leftarrow 0$};
	\For {$i\leftarrow1$ to $n$}
		\State{$s_r \leftarrow s_r + \vtxt{Rowindex}_i$};
		\For {$j\leftarrow\vtxt{Rowindex}_i$ to $\vtxt{Rowindex}_{i+1} - 1$}
			\State{$ind \leftarrow \vtxt{Colid}_j$};
			\If{$b_j$}
				\State{$\widehat{y}_i \leftarrow \widehat{y}_i + \vtxt{Val}_j \cdot x_{ind}$};
			\Else
				\State{$y_i \leftarrow y_i + \vtxt{Val}_j \cdot x_{ind}$};
			\EndIf
		\EndFor
	\EndFor

	\State{$c_{y} \leftarrow \sum_{i=1}^{n} y_i$;
          $c_{\widehat{y}} \leftarrow \sum_{i=1}^{n} \widehat{y}_i$};
        \State{$d_{x} \leftarrow \trans{\vtr{c}}\vtr{x} - c_y$;
          $d_{\widehat{x}} \leftarrow \trans{\vtr{\widehat{c}}}\vtr{x} - c_{\widehat{y}}$};
        \State{$d_{x^\prime} \leftarrow \trans{\vtr{c}}\vtr{x}^\prime - c_y$;
          $d_{\widehat{x}^\prime} \leftarrow \trans{\vtr{\widetilde{c}}}\vtr{x}^\prime - c_{\widehat{y}}$};
        \State{$d_{r} \leftarrow c_r - s_r $};
	\If{$ d_x = 0 \wedge d_{x^\prime} = 0 \wedge d_{\widetilde{x}} = 0 \wedge d_{\widetilde{x}^\prime} = 0  \wedge d_r = 0$}
		\Return{$\vtr{y} + \widehat{\vtr{y}}$};
	\Else
		\State{\textbf{error} (``Soft error is detected")};
	\EndIf
\EndFunction
\end{algorithmic}
\caption{Splitting checksum algorithm.}
\label{alg:ssr-split}
\end{small}
\end{algorithm}

\begin{table}
\centering
\caption{Overhead comparison for Algorithm~\ref{alg:ssr-shift} and Algorithm~\ref{alg:ssr-split}. Here $n$ denotes the size of the matrix and $\np$ the number of null sum columns.}
\small{
\begin{tabular}{l|c|c}
									& Algorithm~\ref{alg:ssr-shift} 
											& Algorithm~\ref{alg:ssr-split} \\
\midrule
initialization of $\vtr{y}$ 		& $n$ 		& $n$ \\
computation of $y_{n+1}$			& $n$ 		& - \\
SpMxV overhead		& -			& $\np$ \\
checksum check	& $2n$		& $2n + 2\np$ \\
computation of $c_{y}$ and $c_{\widehat{y}}$
									& $n$ 		& $n + \np$ \\
computation of $\vtr{y} + \widehat{\vtr{y}}$
									& -			& $\np$ \\
\midrule
Total SpMxV overhead	& $5n$		& $4n + 5\np$ \\
\bottomrule
\end{tabular}
}
\label{tab:shift-split-comparison}
\end{table}

The evaluation of the performance of the two algorithms, though straightforward
from the point of view of the computational cost, has to be carefully assessed
in order to devise a valid and practical trade-off between
Algorithm~\ref{alg:ssr-shift} and Algorithm~\ref{alg:ssr-split}.

In both cases \textsc{computeChecksum} introduces an overhead of
$\mO{\nnz(\vtr{A})}$, but the shift version should in general be faster
containing less assignments than its counterpart, and this changes the
constant factor hidden by the asymptotic notation.
Nevertheless, as we are interested in performing many SpMxV with a same matrix,
this pre-processing overhead becomes negligible.

The function \textsc{SpMxV} has to be invoked once for each multiplication, and
hence more care is needed.
Copying $\vtr{x}$ and initializing $\vtr{y}$ both require $n$ operations, and
the multiplication is performed in time $\mO{\nnz(A)}$, but the split version
pays an $\np$ more to read the values of the sparse vector $\vtr{b}$.
The cost of the verification step depends instead on the number of zeroes of the
original checksum vector, that is also the number of non-zero elements of the
sparse vector $\widehat{\vtr{c}}$. Let us call this quantity $\np$.
Then the overhead is $4n$ for the shifting and $3n + 3\np$ for the splitting,
that requires also the sum of two sparse vectors to return the result.
Hence, as summarized in Table~\ref{tab:shift-split-comparison}, the two methods
bring different overhead into the computation.
Comparing them, it is immediate to see that the shifting method is cheaper when
\begin{equation*}
\np > \dfrac{n}{5},
\end{equation*}
while it has more operations to do when the opposite inequality holds.
For the equality case, we can just choose to use the first method because of the
cheaper preprocessing phase.
In view of this observation, it is possible to devise a simple algorithm that
exploits this trade-off to achieve better performance.
It suffices to compute the checksum vector of the input matrix, count the number
of non-zeros and choose which detection method to use accordingly.

We also note that by splitting the matrix $\vtr{A}$ into say $\ell$ pieces and checksumming 
each piece separately we can possibly protect $\vtr{A}$ from up to $\ell$ errors, 
by protecting each piece against a single error (obviously the multiple errors should hit different pieces).

\subsection{Multiple error detection}\label{sec.multiple.err}

With some effort, the shifting idea in Algorithm~\ref{alg:ssr-shift} can be
extended to detect errors striking a single SpMxV.
Let us consider the problem of detecting up to $k$ errors in the computation of
$\vtr{y}\gets \vtr{A} \vtr{x}$ introducing an overhead of $\mO{kn}$.
Let $k$ weight vectors $\vtr{w}^{(1)}, \dots, \vtr{w}^{(k)} \in \R{n}$ be such
that any sub-matrix of
\[
\vtr{W} =
\left[
\vtr{w}^{(1)}\;
\vtr{w}^{(2)}\;
\dots\;
\vtr{w}^{(k)}\;
\right]
\]
has full rank.
To  build our ABFT scheme let us note that, if no error occurs, for each weight
vector $\vtr{w}^{(\ell)}$ it holds that
\[
\vtr{w}^{{\trans{(\ell)}}} \vtr{A} = \left[
\sum_{i=1}^{n} w^{(\ell)}_i a^{\vphantom{(\ell)}}_{i,1},
\dots,
\sum_{i=1}^{n} w^{(\ell)}_i a^{\vphantom{(\ell)}}_{i,n}
\right],
\]
and hence that
\begin{eqnarray}
\vtr{w}^{{\trans{(\ell)}}} \vtr{A} \vtr{x} &=&
\sum_{i=1}^{n} w^{(\ell)}_i a^{\vphantom{(\ell)}}_{i,1} x^{\vphantom{(\ell)}}_1 +
\dots +
\sum_{i=1}^{n} w^{(\ell)}_i a^{\vphantom{(\ell)}}_{i,n} x^{\vphantom{(\ell)}}_n \nonumber \\
&=&
\sum_{i=1}^{n} \sum_{j=1}^{n} w^{(\ell)}_i a^{\vphantom{(\ell)}}_{i,j} x^{\vphantom{(\ell)}}_j. \nonumber
\end{eqnarray}
Similarly, the sum of the entries of $\vtr{y}$ weighted with the same
$\vtr{w}^{(\ell)}$ is
\begin{eqnarray}
\sum_{i=1}^{n} w^{(\ell)}_i y_i
&=& w^{(\ell)}_1 y^{\vphantom{(\ell)}}_1 + \dots + w^{(\ell)}_n y^{\vphantom{(\ell)}}_n \nonumber \\
&=& w^{(\ell)}_1 \sum_{j=1}^{n} a_{1,j} x_j
	+ \dots + w^{(\ell)}_n \sum_{j=1}^{n} a^{\vphantom{(\ell)}}_{n,j} x^{\vphantom{(\ell)}}_j \nonumber \\
&=& \sum_{i=1}^{n} \sum_{j=1}^{n} w^{(\ell)}_i a^{\vphantom{(\ell)}}_{i,j} x_j, \nonumber
\end{eqnarray}
and we can conclude that
\[
\sum_{i=1}^{n} w^{(\ell)}_i y^{\vphantom{(\ell)}}_i =
\left(\vtr{w}^{\trans{(\ell)}} \vtr{A} \right) \vtr{x},
\]
for any $\vtr{w}^{(\ell)}$ with $1 \leq \ell \leq k$.

To convince ourself that with these checksums it is actually possible to detect
up to $k$ errors, let us suppose that $k^\prime$ errors, with $k^\prime \leq k$,
occur in positions $p_1, \dots, p_{k^\prime}$, and let us denote by
$\widetilde{\vtr{y}}$ the faulty vector where
 $\widetilde{y}_{p_i} = y_{p_i} + \eps_{p_i}$ for
 $\eps_{p_i} \in \R{} \setminus \lbrace 0 \rbrace$ and $1 \leq i \leq k^\prime$
and $\widetilde{y}_{i} = y_{i}$ otherwise.
Then for each weight vector we have
\[
\sum_{i=1}^{n} w^{(\ell)}_i \widetilde{y}^{\vphantom{(\ell)}}_i - \sum_{i=1}^{n} w^{(\ell)}_i y^{\vphantom{(\ell)}}_i = \sum_{j=1}^{k^\prime} w^{(\ell)}_{p_j} \eps^{\vphantom{(\ell)}}_{p_j}.
\]
Said otherwise, the occurrence of the $k^\prime$ errors is not detected if and
only if, for $1 \leq \ell \leq k$, all the $\eps_{p_i}$ respect
\begin{equation}
\sum_{j=1}^{k^\prime} w^{(\ell)}_{p_j} \eps^{\vphantom{(\ell)}}_{p_j} = 0\;.
\label{eq:kernel-condition}
\end{equation}
We claim that there cannot exist a vector
$\trans{(\eps_{p_1}, \dots, \eps_{p_k^\prime})} \in \R{k^\prime}
\setminus \left\lbrace 0 \right\rbrace$
such that all the conditions in \eqref{eq:kernel-condition} are
simultaneously verified.
These conditions can be expressed in a more compact way as a linear
system
\[
\left(
\begin{matrix}
w^{(1)}_{p_1} 			& \cdots 	& w^{(1)}_{p_{k^\prime}}	 	\\
\vdots					& \ddots	& \vdots 						\\
w^{(k)}_{p_1}	& \cdots	& w^{(k)}_{p_{k^\prime}}	\\
\end{matrix}
\right)
\left(
\begin{matrix}
\eps^{\vphantom{(\ell)}}_{p_1} 			\\
\vdots 				\\
\eps^{\vphantom{(\ell)}}_{p_{k^\prime}}	\\
\end{matrix}
\right)
=
\left(
\begin{matrix}
0		\\
\vdots	\\
0		\\
\end{matrix}
\right).
\]

Denoting by $\vtr{W}^*$ the coefficient matrix of this system, it is clear that
the errors cannot be detected if only if
\mbox{$\trans{(\eps_{p_1}, \dots \eps_{p_{k^\prime}})} \in \ker(\vtr{W}^*)
\setminus \left\lbrace 0 \right\rbrace$}.
Because of the properties of $\vtr{W}$, we have that $\rk(\vtr{W}^*) = k$.
Moreover, it is clear that the rank of the augmented matrix
\[
\left(
\begin{matrix}
w^{(1)}_{p_1} 			& \cdots 	& w^{(1)}_{p_{k^\prime}} \\
\vdots					& \ddots	& \vdots \\
w^{(k)}_{p_1}	& \cdots	& w^{(k)}_{p_{k^\prime}} \\
\end{matrix}
\, \middle\vert \,
\begin{matrix}
  0 \\
  \vdots \\
  0
\end{matrix}
\right)
\]
is $k$ as well.
Hence, by means of the Rouch\'{e}--Capelli theorem, the solution of the system is
unique and the null space of $\vtr{W}^*$ is trivial.
Therefore, this construction can detect the occurrence of $k^\prime$ errors
during the computation of $\vtr{y}$ by comparing the values of the weighted sums
$\trans{\vtr{y}}\vtr{w}^{(\ell)}$ with the result of the \mbox{dot product}
$(\vtr{w}^{\trans{(\ell)}}\vtr{A})\vtr{x}$, for $1 \leq \ell \leq k$.

However, to get a true extension of the algorithm described in the previous
section, we also need to make it able to detect errors that strike the sparse
representation of $\vtr{A}$ and that of $\vtr{x}$.
The first case is simple, as the $k$ errors can strike the $\vtxt{Val}$ or
$\vtxt{Colid}$ arrays, leading to at most $k$ errors in $\vtr{\widetilde{y}}$,
or in $\vtxt{Rowindex}$, where they can be caught using $k$ weighted checksums
of the $\vtxt{Rowindex}$ vector.

Detection in $\vtr{x}$ is much trickier, since neither the algorithm just
described nor a direct generalization of Algorithm~\ref{alg:ssr-shift} can
manage this case. Nevertheless, a proper extension of the shifting technique is
still possible.
Let us note that there exists a matrix $\vtr{M} \in \R {k \times n}$ such that
\[
\trans{\vtr{W}}\vtr{A} + \vtr{M} = \vtr{W}.
\]
The elements of such an $\vtr{M}$ can be easily computed, once that the checksum
rows are known.
Let $\widetilde{\vtr{x}} \in \R{n}$ be the faulty vector, defined by
\[
\widetilde{x}_i =
\left\lbrace
\begin{array}{ll}
x_i + \eps_{p_i},\qquad\qquad & 1 \leq i \leq k', \\[11pt]
x_i & otherwise.
\end{array}
\right.
\]
for some $k^\prime \leq k$, and let us define
$\widetilde{\vtr{y}} = \vtr{A}\widetilde{\vtr{x}}$.
Now, let us consider a checksum vector $\vtr{x}^\prime \in \R {n}$
such that $\vtr{x}^\prime = \vtr{x}$ and let assume that it cannot be modified
by a transient error.
For $1 \leq \ell \leq k$, it holds that
\begin{footnotesize}
\begin{align*}
\sum_{i=1}^{n} w_i^{(\ell)} \widetilde{y}^{\vphantom{(\ell)}}_i
+ \sum_{j=1}^{n} m^{\vphantom{(\ell)}}_{\ell,j} \widetilde{x}^{\vphantom{(\ell)}}_j
&= \sum_{i=1}^{n}\sum_{j=1}^{n} w_i^{(\ell)} a^{\vphantom{(\ell)}}_{i,j} x^{\vphantom{(\ell)}}_j
+ \sum_{i=1}^{k^\prime}\eps^{\vphantom{(\ell)}}_{p_i}
\left(\sum_{j=1}^{n} w_j^{(\ell)} a^{\vphantom{(\ell)}}_{j,p_i} \right)
+\sum_{j=1}^{n} m^{\vphantom{(\ell)}}_{\ell,j} x^{\vphantom{(\ell)}}_j
+ \sum_{i=1}^{k^\prime}\eps^{\vphantom{(\ell)}}_{p_i} m^{\vphantom{(\ell)}}_{l,p_i} \nonumber \\
&= \sum_{j=1}^{n}\sum_{i=1}^{n} w_i^{(\ell)} a^{\vphantom{(\ell)}}_{i,j} x^{\vphantom{(\ell)}}_j
+ \sum_{j=1}^{n} m^{\vphantom{(\ell)}}_{\ell,j} x^{\vphantom{(\ell)}}_j
+ \sum_{i=1}^{k^\prime}\eps^{\vphantom{(\ell)}}_{p_i}
\left(\sum_{j=1}^{n} w_j^{(\ell)} a^{\vphantom{(\ell)}}_{j,p_i} + m^{\vphantom{(\ell)}}_{l,p_i}\right) \nonumber \\
&= \sum_{j=1}^{n}\left(\sum_{i=1}^{n} w_i^{(\ell)} a^{\vphantom{(\ell)}}_{i,j}\right) x^{\vphantom{(\ell)}}_j
+ \sum_{j=1}^{n} m^{\vphantom{(\ell)}}_{\ell,j} x^{\vphantom{(\ell)}}_j
+ \sum_{i=1}^{k^\prime}\eps^{\vphantom{(\ell)}}_{p_i} w^{(\ell)}_{p_i} \nonumber \\
&= \sum_{j=1}^{n}
\left(\sum_{i=1}^{n} w_i^{(\ell)} a^{\vphantom{(\ell)}}_{i,j} + m_{\ell,j}\right) x^{\vphantom{(\ell)}}_j
+ \sum_{i=1}^{k^\prime}\eps_{p_i} w^{(\ell)}_{p_i} \nonumber \\
&= \vtr{w}^{\trans{(\ell)}} \vtr{x}
+ \sum_{i=1}^{k^\prime}\eps^{\vphantom{(\ell)}}_{p_i} w^{(\ell)}_{p_i} \nonumber \\
&= \vtr{w}^{\trans{(\ell)}} \vtr{x}^\prime
+ \sum_{i=1}^{k^\prime}\eps^{\vphantom{(\ell)}}_{p_i} w^{(\ell)}_{p_i} \nonumber\;.
\end{align*}
\end{footnotesize}
Therefore, an error is not detected if and only if the linear system
\[
\left(
\begin{matrix}
w^{(1)}_{p_1} 			& \cdots 	& w^{(1)}_{p_{k^\prime}}	 	\\
\vdots					& \ddots	& \vdots 						\\
w^{(k)}_{p_1}	& \cdots	& w^{(k)}_{p_{k^\prime}}	\\
\end{matrix}
\right)
\left(
\begin{matrix}
\eps^{\vphantom{(\ell)}}_{p_1} 			\\
\vdots 				\\
\eps^{\vphantom{(\ell)}}_{p_{k^\prime}}	\\
\end{matrix}
\right)
=
\left(
\begin{matrix}
0		\\
\vdots	\\
0		\\
\end{matrix}
\right)
\]
has a non-trivial solution.
But we have already seen that such a situation can never happen, and we can thus
conclude that our method, whose pseudocode we give in
Algorithm~\ref{alg:ssr-multiple-shift}, can also detect up to $k$ errors
occurring in $\vtr{x}$.
Therefore, we have proven the following theorem.

\begin{theorem}[Correctness of Algorithm~\ref{alg:ssr-multiple-shift}]
\label{th:ssr-multiple-shift}
Let us consider the same notation as in Theorem~\ref{th:ssr-shift}.
Let $\vtr{\underline{W}} \in \R{n + 1 \times n}$ be a matrix such that any
square submatrix is full rank, and let us denote by
\mbox{$\vtr{W} \in \R{n \times n}$} the matrix of its first $n$ rows.
Let us assume an encoding scheme that relies on
\begin{enumerate}
\item an auxiliary checksum matrix
$\vtr{C} = \trans{\left(\trans{\vtr{W}_{\vphantom{0}}}\vtr{A}\right)}$,

\item an auxiliary checksum matrix $\vtr{M} = \vtr{W} - \vtr{C}$,

\item a vector of auxiliary counters $\vtr{s}_{Rowindex}$ initialized to the
null vector and updated at runtime as in lines 16 -- 17 of
Algorithm~\ref{alg:ssr-multiple-shift}),

\item an auxiliary checksum vector
\mbox{$\vtr{c}_{Rowindex} = \trans{\vtr{\underline{W}}}\vtxt{Rowindex}$.}

\end{enumerate}
Then, up to $k$ errors striking the computation of $\vtr{y}$ or the memory
locations that store $\vtr{A}$ or $\vtr{x}$, cause one of the following
conditions to fail:
\begin{enumerate}[i.]
\item $\trans{\vtr{W}} \vtr{y} = \trans{\vtr{C}} \vtr{x}$,
\item $\trans{\vtr{W}}\left( \vtr{x}^\prime - \vtr{y} \right)$,
\item $\vtr{s}_{Rowindex}$ = $\vtr{c}_{Rowindex}$.
\label{it:ssr-mshift-iii}
\end{enumerate}
\end{theorem}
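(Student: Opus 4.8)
The plan is to mirror the three-case decomposition used in the proof of Theorem~\ref{th:ssr-shift}, since the discussion immediately preceding the statement has already supplied essentially all the algebra and the theorem merely packages it. I would fix the faulty quantities $\widetilde{\vtr{y}}$, $\widetilde{\vtr{A}}$, $\widetilde{\vtr{x}}$ as in~\eqref{eq:encoded-spmxv-mod}, suppose at most $k$ errors occur in total (say $k^\prime \le k$ of them, in positions $p_1,\dots,p_{k^\prime}$), and show that a nonzero error pattern forces at least one of~(i)--(iii) to fail. The single fact that unifies every case is this: in each scenario, undetectability reduces to the statement that the error vector $\trans{(\eps_{p_1},\dots,\eps_{p_{k^\prime}})}$ lies in the kernel of the $k \times k^\prime$ submatrix $\vtr{W}^*$ of $\vtr{W}$ formed by the corrupted columns, and the full-rank hypothesis guarantees this kernel is trivial. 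So I would prove the kernel statement once and reuse it in all three cases.

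First, \textbf{Case~\ref{it:ssr-shift-a}} (a faulty arithmetic operation in forming $\vtr{y}$): writing $\widetilde{y}_{p_j}=y_{p_j}+\eps_{p_j}$, the gap between the two sides of condition~(i) equals $\sum_{j=1}^{k^\prime} w^{(\ell)}_{p_j}\eps_{p_j}$ for each weight $\vtr{w}^{(\ell)}$, which is exactly the quantity in~\eqref{eq:kernel-condition}. Stacking these $k$ scalar identities yields the homogeneous system $\vtr{W}^*\trans{(\eps_{p_1},\dots,\eps_{p_{k^\prime}})}=\vtr{0}$; since every $k^\prime \times k^\prime$ submatrix of $\vtr{W}$ is nonsingular, $\rk(\vtr{W}^*)=k^\prime$, and Rouch\'e--Capelli leaves only the trivial solution, so any nonzero error breaks~(i). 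Next, \textbf{Case~\ref{it:ssr-shift-b}} (a corruption in the sparse storage of $\vtr{A}$): a fault in $\vtxt{Val}$ or $\vtxt{Colid}$ perturbs at most $k$ entries of $\widetilde{\vtr{y}}$ and hence reduces to Case~\ref{it:ssr-shift-a}, while a fault in $\vtxt{Rowindex}$ desynchronizes the running counters $\vtr{s}_{Rowindex}$ from the precomputed $\vtr{c}_{Rowindex}=\trans{\vtr{\underline{W}}}\vtxt{Rowindex}$, breaking~(iii).

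Finally, \textbf{Case~\ref{it:ssr-shift-c}} (errors in $\vtr{x}$) is where the shift genuinely generalizes and where I expect the real work to sit. Here I would use the correction matrix $\vtr{M}=\vtr{W}-\vtr{C}$, equivalently the entrywise relation $\sum_{i=1}^{n} w^{(\ell)}_i a_{i,j}+m_{\ell,j}=w^{(\ell)}_j$, and replay the telescoping computation displayed before the theorem to reach $\sum_{i} w^{(\ell)}_i\widetilde{y}_i+\sum_{j} m_{\ell,j}\widetilde{x}_j=\vtr{w}^{\trans{(\ell)}}\vtr{x}^\prime+\sum_{i=1}^{k^\prime}\eps_{p_i} w^{(\ell)}_{p_i}$. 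The crux is that $\vtr{M}$ is built precisely so that the term $\sum_{j} w^{(\ell)}_j a_{j,p_i}$ produced by pushing a perturbed $\widetilde{x}_{p_i}$ through $\vtr{A}$ is cancelled by $m_{\ell,p_i}$, leaving only the clean residual $\eps_{p_i} w^{(\ell)}_{p_i}$; verifying that this cancellation is exact for every $\ell$ simultaneously is the one delicate bookkeeping step. Once it is in place, the residual is again of the form $\vtr{W}^*\trans{(\eps_{p_1},\dots,\eps_{p_{k^\prime}})}$, so the triviality-of-kernel argument shows that a nonzero $\vtr{x}$-error breaks condition~(ii). The main obstacle is therefore not any single case but isolating and proving cleanly the shared linear-algebra fact $\ker(\vtr{W}^*)=\{\vtr{0}\}$ under the submatrix full-rank hypothesis, after which all three cases funnel into it.
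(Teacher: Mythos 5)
Your proposal is correct and takes essentially the same route as the paper: the paper's ``proof'' is precisely the discussion preceding the theorem, which (a) handles errors in the computation of $\vtr{y}$ by stacking the weighted-checksum discrepancies into the homogeneous system $\vtr{W}^*\trans{(\eps_{p_1},\dots,\eps_{p_{k^\prime}})}=\vtr{0}$ and invoking Rouch\'e--Capelli, (b) reduces $\vtxt{Val}$/$\vtxt{Colid}$ faults to output errors and catches $\vtxt{Rowindex}$ faults with the weighted counter checksums, and (c) treats errors in $\vtr{x}$ via the matrix $\vtr{M}$ and the same telescoping cancellation, funneling back into the identical trivial-kernel fact. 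Your only deviation---asserting $\rk(\vtr{W}^*)=k^\prime$ rather than the paper's $\rk(\vtr{W}^*)=k$---is in fact the more accurate statement of that shared lemma, since $\vtr{W}^*$ has only $k^\prime\le k$ columns.
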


Let us note that we have just shown that our algorithm can detect up to $k$
errors striking only $\vtr{A}$, or only $\vtr{x}$ or only the computation.
Nevertheless, this result holds even when the errors are distributed among the
possible cases, as long as at most $k$ errors rely on the same checkpoint.

It is clear that the execution time of the algorithm depends on both
$\nnz(\vtr{A)}$ and $k$. For the \textsc{computeChecksum} function, the cost is,
assuming that the weight matrix $\vtr{W}$ is already known, $\mO{k \nnz(\vtr{A})}$
for the computation of $\vtr{C}$, and $\mO{k n}$ for the computation of
$\vtr{M}$ and $\vtr{c}_{\vtxt{Rowindex}}$. Hence the number of performed
operations is $\mO{k \nnz(\vtr{A})}$. The overhead added to the SpMxV depends
instead on the computation of four checksum matrices that lead to a number of
operations that grows asymptotically as $k n$.

\begin{algorithm}
\begin{small}
\begin{algorithmic}[1]
\Require{$\vtr{A} \in \R {n \times n}$, $\vtr{x} \in \R{n}$}
\Ensure{$\vtr{y} \in \R{n}$ such that $\vtr{y} = \vtr{A} \vtr{x}$ or the detection of up to $k$ errors}

\State{$\vtr{x}^\prime \leftarrow \vtr{x}$}
\State{$[\vtr{\underline{W}}, \vtr{C}, \vtr{M}, \widetilde{\vtr{c}}] =$} \Call{computeChecksums}{$\vtr{A}$, $k$};
\Return{\Call{SpMxV}{$\vtr{A}$, $\vtr{x}$, $\vtr{x}^\prime$, $\vtr{\underline{W}}$, $\vtr{C}$, $\vtr{M}$, $k$, $\widetilde{\vtr{c}}$}};

\Statex

\Function{computeChecksums}{$\vtr{A}$, $k$}
	\State{Generate $\vtr{\underline{W}} = \left[\vtr{w}^{(1)} \dots \vtr{w}^{(k)}\right] \in \R{n+1 \times n}$};
	\State{$\vtr{W} \leftarrow \vtr{\underline{W}}_{1:n, *} \in \R{n \times k}$}
	\State{$\trans{\vtr{C}} \leftarrow \trans{\vtr{W}} \vtr{A}$};
	\State{$\vtr{M} \leftarrow \vtr{W} - \vtr{C}$};
	\State{$\vtr{c}_{\vtxt{Rowindex}} \leftarrow \trans{\vtr{\underline{W}}}\vtxt{Rowindex} $};
	\Return{$\vtr{\underline{W}}, \vtr{C}, \vtr{M}, \vtr{c}_{Rowindex}$};
\EndFunction

\Statex

\Function{SpMxV}{$\vtr{A}$, $\vtr{x}$, $\vtr{x}^\prime$, $\vtr{\underline{W}}$, $\vtr{C}$, $\vtr{M}$, $k$, $\widetilde{\vtr{c}}$}
	\State{$\vtr{W} \leftarrow \vtr{\underline{W}}_{1:n, *} \in \R{n \times k}$}
	\State{$\widetilde{\vtr{s}} \leftarrow [0, \dots, 0]$};
	\For {$i\leftarrow1$ to $n$}
		\State{$y_i \leftarrow 0$};
		\For{$j=1$ to $k$}
			\State{$\widetilde{s}_j \leftarrow \widetilde{s}_j + w_{ij} \vtxt{Rowindex}_i$};
		\EndFor
		\For {$j\leftarrow\vtxt{Rowindex}_i$ to $\vtxt{Rowindex}_{i+1} - 1$}
			\State{$ind \leftarrow \vtxt{Colid}_j$};
			\State{$y_i \leftarrow y_i + \vtxt{Val}_j \cdot x_{ind}$};
		\EndFor
	\EndFor
        \State{$\vtr{d}_{x} \leftarrow \trans{\vtr{W}} \vtr{y} - \trans{\vtr{C}} \vtr{x}$};
        \State{$\vtr{d}_{x^\prime} \leftarrow \trans{\vtr{W}}\left( \vtr{x}^\prime - \vtr{y} \right) - \trans{\vtr{M}} \vtr{x} $};
        \State{$\vtr{d}_r \leftarrow \widetilde{\vtr{c}} - \widetilde{\vtr{s}}$};
	\If{$ \vtr{d}_x = 0 \wedge \vtr{d}_{x^\prime} = 0 \wedge \vtr{d}_r = 0$}
		\Return{$\vtr{y}$};
	\Else
		\State{\textbf{error} (``Soft errors are detected")};
	\EndIf
\EndFunction
\end{algorithmic}
\caption{Shifting checksum algorithm for $k$ errors detection.}
\label{alg:ssr-multiple-shift}
\end{small}
\end{algorithm}

\subsection{Single error correction\label{sec.err.correct}}

We now discuss single error correction, using
Algorithm~\ref{alg:ssr-multiple-shift} as a reference.
We describe how a single error striking either memory or computation can be
not only detected but also corrected at line~27.
We use only two checksum vectors, that is, we describe correction of single
errors assuming that two errors cannot strike the same SpMxV.
By the end of the section, we will generalize this approach and discuss how
single error correction and double error detection can be performed concurrently
by exploiting three linearly independent checksum vectors.
Whenever a single error is detected, regardless of its location (computation or
memory), it is corrected by means of a succession of various steps.
When one or more errors are detected, the correction mechanism tries to
determine their multiplicity and, in case of a single error, what memory
locations have been corrupted or what computation has been miscarried.
Errors are then corrected using the values of the checksums and, if
need be, partial recomputations of the result are performed.




As we did for multiple error detection, we require that any $2 \times 2$
submatrix of $\vtr{W} \in \R{n \times 2}$ has full rank.
The simplest example of weight matrix having this property is probably
\[
\vtr{W} =
\begin{bmatrix}
1 & 1 \\
1 & 2 \\
1 & 3 \\
\vdots & \vdots \\
1 & n
\end{bmatrix}.
\]

To detect errors striking $\vtxt{Rowidx}$, we compute the ratio $\rho$ of
the second component of $\vtr{d}_r$ to the first one, and check whether its
distance from an integer is smaller than a certain threshold parameter $\eps$.
If this distance is smaller, the algorithm concludes that the $\sigma$th element, where $\sigma=Round(\rho)$ is the nearest integer to $\rho$, of $\vtxt{Rowidx}$
is faulty, performs the correction by subtracting the first component of
$\vtr{d}_r$ from $\vtxt{Rowidx}_\sigma$, and recomputes $y_{\sigma}$ and $y_{\sigma - 1}$, if the
error in $\vtxt{Rowindex_\sigma}$ is a decrement; or $y_{\sigma + 1}$ if it was an
increment.
Otherwise, it just emits an error.

The correction of errors striking $\vtxt{Val}$, $\vtxt{Colid}$ and the
computation of $y$ are performed together.
Let now $\rho$ be the ratio of the second component of $\vtr{d}_x$ to the first
one.
If $\rho$ is near enough to an integer $\sigma$, the algorithm computes the checksum matrix
$\vtr{C}^\prime = \trans{\vtr{W}} \vtr{A}$ and considers the number
$z_{\widetilde{\vtr{C}}}$ of non-zero columns of the difference matrix
$\widetilde{\vtr{C}} = | \vtr{C} - \vtr{C}^\prime |$.
At this stage, three cases are possible:
\begin{itemize}
\item If $z_{\widetilde{\vtr{C}}} = 0$, then the error is in the computation of
$y_\sigma$, and can be corrected by simply recomputing this value.

\item If $z_{\widetilde{\vtr{C}}} = 1$, then the error has struck an element of
$\vtxt{Val}$.
Let us call $f$ the index of the non-zero column of $\widetilde{\vtr{C}}$.
The algorithm finds the element of $\vtxt{Val}$ corresponding to the entry at
row $\sigma$ and column $f$ of $A$ and corrects it by using the column checksums
much like as described for $\vtxt{Rowidx}$.
Afterwards, $y_d$ is recomputed to fix the result.

\item If $z_{\widetilde{\vtr{C}}} = 2$, then the error concerns an element of
$\vtxt{Colid}$. Let us call $f_1$ and $f_2$ the index of the two non-zero
columns and $m_1$, $m_2$ the first and last elements of $\vtxt{Colid}$
corresponding to non-zeros in row $\sigma$. It is clear that there exists exactly one
index $m^*$ between $m_1$ and $m_2$ such that either $\vtxt{Colid}_{m^*} = f_1$
or $\vtxt{Colid}_{m^*} = f_2$. To correct the error it suffices to switch the
current value of $\vtxt{Colid}_{m^*}$, i.e., putting $\vtxt{Colid}_{m^*} = f_2$
in the former case and $\vtxt{Colid}_{m^*} = f_1$ in the latter. Again, $y_\sigma$
has to be recomputed.

\item if $z_{\widetilde{\vtr{C}}} > 2$, then errors can be detected but not
corrected, and an error is emitted.
\end{itemize}

To correct errors striking $\vtr{x}$, the algorithm computes $\rho$, that is the
ratio of the second component of $\vtr{d}_{x^{\prime}}$ to the first one, and
checks that the distance between $d$ and the nearest integer $\sigma$ is smaller than
$\eps$.
Provided that this condition is verified, the algorithm computes the value of
the error $\tau = \sum_{i = 1}^{n}x_i - cx_\sigma$ and corrects $x_\sigma = x_\sigma - \tau$.
The result is updated by subtracting from $\vtr{y}$ the vector
$\vtr{y}^\tau = \vtr{A} \vtr{x}^\tau$, where $\vtr{x}^\tau \in \R{n \times n}$
is such that $x_\sigma^\tau = \tau$ and $x_i^\tau = 0$ otherwise.

Let us now investigate how detection and correction can be combined and let us
give some details about the implementation of \abftcorrect as defined in
Section~\ref{sec.CG}.
Indeed, note that double errors could be shadowed when using
Algorithm~\ref{alg:ssr-shift}, although the probability of such an event is
negligible.

Let us restrict ourselves to an easy case, without considering errors in
$\vtr{x}$.
As usual, we compute the column checksums matrix
\[
\vtr{C} = \trans{\left(\trans{\vtr{W}}\vtr{A}\right)},
\]
and then compare the two entries of $\trans{\vtr{C}}\vtr{x} \in \R{2}$
with the weighted sums
\[
\widetilde{y}^c_1 = \sum_{i=1}^{n} \widetilde{y}_i
\]
and
\[
\widetilde{y}^c_2 = \sum_{i=1}^{n} i\widetilde{y}_i
\]
where $\widetilde{\vtr{y}}$ is the possibly faulty vector computed by the
algorithm.
It is clear that if no error occurs, the computation verifies the condition
$\vtr{\delta} = \widetilde{\vtr{y}}^c - \vtr{c} = 0$.
Furthermore, if exactly one error occurs, we have
$\delta_1, \delta_2 \neq 0$ and $\frac{\delta_2}{\delta_1} \in \N{}$, and if two
errors strike the vectors protected by the checksum $\vtr{c}$, the algorithm is
able to detect them by verifying that $\vtr{\delta} \neq 0$.

At this point it is natural to ask whether this information is enough to build
a working algorithm or some border cases can bias its behavior.
In particular, when $\frac{\delta_2}{\delta_1} = p \in \N{}$, it is not clear how
to discern between single and double errors.
Let $\eps_1, \eps_2 \in \R{} \setminus \left\lbrace 0 \right\rbrace$ be the
value of two errors occurring at position $p_1$ and $p_2$ respectively, and let
$\widetilde{\vtr{y}} \in \R{n}$ be such that
\[
\widetilde{y}_i =
\left\lbrace
\begin{array}{ll}
y_i, \qquad\qquad &1 \leq i \leq n, i \neq p_1, p_2 \\
y_i + \eps_1, &i = p_1 \\
y_i + \eps_2, &i = p_2
\end{array}
\right.
.
\]
Then the conditions
\begin{eqnarray}
\delta_1 &=& \eps_1 + \eps_2, \\
\delta_2 &=& p_1 \eps_1 + p_2 \eps_2,
\end{eqnarray}
hold. Therefore, if $\eps_1$ and $\eps_2$ are such that
\begin{equation}
\label{eq:line-condition}
p \left(\eps_1 + \eps_2\right) = p_1 \eps_1 + p_2 \eps_2,
\end{equation}
it is not possible to distinguish these two errors from a single error of value
$\eps_1 + \eps_2$ occurring in position $p$. This issue can be solved by
introducing a new set of weights and hence a new row of column checksums.
Let us consider a weight matrix $\widehat{\vtr{W}} \in \R {n \times 3}$
that includes a quadratic weight vector
\[
\widehat{\vtr{W}} =
\begin{bmatrix}
1 & 1 & 1 \\
1 & 2 & 4 \\
1 & 3 & 9 \\
\vdots & \vdots  & \vdots\\
1 & n & n^2
\end{bmatrix},
\]
and the tridimensional vector
\[
\hat{\delta} = \left(\trans{\widehat{\vtr{W}}}\vtr{A}\right)\vtr{x} -
	\trans{\widehat{\vtr{W}}}\widetilde{\vtr{y}}\;,
\]
whose components can be expressed as
\begin{eqnarray}
\delta_1 &=& \eps_1 + \eps_2 \nonumber \\
\delta_2 &=& p_1 \eps_1 + p_2 \eps_2 \nonumber \\
\delta_2 &=& p_1^2 \eps_1 + p_2^2 \eps_2\;. \nonumber
\end{eqnarray}
To be confused with a single error in position $p$, $\eps_1$ and $\eps_2$ have
to be such that
\[
p \left(\eps_1 + \eps_2\right) = p_1 \eps_1 + p_2 \eps_2
\]
and
\[
p^2 \left(\eps_1 + \eps_2\right) = p_1^2 \eps_1 + p_2^2 \eps_2
\]
hold simultaneously for some $p \in \N{}$. In other words, possible values of
the errors are the solution of the linear system
\[
\begin{pmatrix}
(p - p_1) & (p - p_2) \\
(p^2 -p_1^2) & (p^2 - p_2^2) \\
\end{pmatrix}
\begin{pmatrix}
\eps_1 \\
\eps_2 \\
\end{pmatrix}
=
\begin{pmatrix}
0 \\
0
\end{pmatrix}.
\]
It is easy to see that the determinant of the coefficient matrix is
\[
\left(p - p_1\right)\left(p - p_2\right)\left(p_2 - p_1\right),
\]
which always differs from zero, as long as $p$, $p_1$ and $p_s$ differ pairwise.
Thus, the matrix is invertible, and the solution space of the linear system is
the trivial kernel $(\eps_1, \eps_2) = (0, 0)$. Thus using $\widehat{\vtr{W}}$
as weight matrix guarantees that it is always possible to distinguish a single error
from double errors.


\section{Performance model}
\label{sec.maths}
In Section~\ref{sec.model.general}, we introduce the general performance model.
Then in Section~\ref{sec.model.CG} we instantiate it for the three methods that
we are considering, namely
\onlinechen, \abftdetect and \abftcorrect.

\subsection{General approach}
\label{sec.model.general}
We introduce an abstract performance model to compute the best
combination of checkpoints and verifications for iterative methods.
We execute $T$ time-units of work followed by a verification,
which we call a \emph{chunk},
and we repeat this scheme
$s$ times, i.e., we compute $s$ chunks, before taking a checkpoint.
We say that the $s$ chunks constitute a \emph{frame}.
The whole execution is then partitioned into frames.
We assume that checkpoint, recovery and verification are error-free
operations. Let $T_{cp}$, $T_{rec}$ and $T_{verif}$ be the respective cost
of these operations. Finally, assume an exponential distribution  of errors
and let $q$ be the probability of successful execution
for each chunk: $q = e^{-\lambda T}$, where $\lambda$ is the fault rate.

The goal of this section is to compute the expected time $\ep{s,T}$
needed to execute a frame composed of $s$ chunks of size $T$.
We derive the best value of $s$ as a function of $T$ and of the resilience
parameters $T_{cp}$, $T_{rec}$, $T_{verif}$, and $q$, the success probability of
a chunk.
Each frame is preceded by a checkpoint, except maybe the first one (for which we
recover by reading initial data again).
Following earlier work~\cite{c178}, we derive the following recursive equation
to compute the expected completion time of a single frame:
\begin{align}
  \label{eq:newmodel}
  \begin{split}
    \ep{s,T} &= q^{s} ( s (T+T_{verif})) + T_{cp}) \\
    &+ \left( 1 - q^{s} \right) \left(\ep{T_{lost}} + T_{rec} + \ep{s,T}\right).
  \end{split}
\end{align}

Indeed, the execution is successful if all chunks are successful, which happens
with probability $q^{s}$, and in this case the execution time simply is the sum
of the  execution times of each chunk plus the final checkpoint.
Otherwise, with probability $1-q^{s}$, we have an error, which we detect after
some time $\ep{T_{lost}}$, and that forces us to recover (in time $T_{rec}$) and restart the frame anew,
hence in time $\ep{s,T}$.
The difficult part is to compute $\ep{T_{lost}}$.

For $1 \leq i \leq s$,
let $\fff_{i}$ be the following conditional probability:
\begin{multline}
\fff_{i} = \mathbb{P}(\text{error strikes at chunk $i$}|\text{there is }\\
\text {an error in the frame})\;.
\end{multline}
Given the success probability $q$ of a chunk, we obtain that
$$\fff_{i} = \frac{q^{i-1} (1-q)}{1-q^{s}}\;.$$
Indeed, the first $i-1$ chunks were successful (probability $q^{i-1}$), the
$i$th one had an error (probability $1-q$), and we condition by the probability
of an error within the frame, namely $1-q^{s}$. With probability $\fff_{i}$, we
detect the error at the end of the $i$th chunk, and we have lost the time spent
executing the first $i$ chunks.
We derive that
$$ \ep{T_{lost}} = \sum_{i=1}^{s} \fff_{i} \left( i (T+T_{verif}) \right)\;.$$

We have
$\sum_{i=1}^{s} \fff_{i} = \frac{(1-q)h(q)}{1-q^{s}}$ where
$h(q) = 1+2q+3q^{2}+\cdots+ s q^{s-1}$.
If $m(q) = q+q^{2}+\cdots +q^{s} = \frac{1-q^{s+1}}{1-q} - 1$,
we get by differentiation that $m'(q)=h(q)$, hence
$h(q)=\frac{- (s+1) q^{s}}{1-q} + \frac{1-q^{s+1}}{(1-q)^{2}}$
and finally
$$\ep{T_{lost}} = (T+T_{verif}) \frac{s q^{s+1} -(s+1) q^{s} +1}{(1-q^{s})(1-q)}\;.$$
Plugging the expression of  $ \ep{T_{lost}}$ back into~\eqref{eq:newmodel},
we obtain
\begin{align*}
\ep{s,T} &= s (T+T_{verif})) + T_{cp} + (q^{-s} -1) T_{rec} \nonumber \\
&+ T \frac{s q^{s+1} -(s+1) q^{s} +1}{q^{s}(1-q)}\;, \nonumber
\end{align*}
\noindent
which simplifies into
\begin{equation*}
\ep{s,T} = T_{cp} + (q^{-s} -1) T_{rec} +  (T+T_{verif}) \frac{1-q^{s}}{q^{s}(1-q)}\;.
\end{equation*}
We have to determine the value of $s$ that minimizes the overhead of a frame:
\begin{equation}
\label{eq:minimize}
s = \argmin_{s \geq 1}\left(
\dfrac{\ep{s,T}}{s T} \right).
\end{equation}
The minimization is complicated and should be conducted numerically (because
$T$, the size of a chunk, is still unknown).
Luckily, a dynamic programming algorithm to compute the optimal value of $T$
and $s$ is available~\cite{pmbs2014}.

\subsection{Instantiation to PCG}
\label{sec.model.CG}
For each of the three methods, \onlinechen, \abftdetect and \abftcorrect,
we instantiate the previous model and discuss how to solve~\eqref{eq:minimize}.

\subsubsection{\onlinechen}

For Chen's method~\cite{Chen2013}, we have chunks of $d$ iterations,
hence $T = d T_{iter}$, where $T_{iter}$ is the raw cost of a PCG iteration
without any resilience method.
The verification time $T_{verif}$ is the cost of the orthogonality check
operations performed as described in Section~\ref{sec.CG}.
As for silent errors, the application is protected from arithmetic errors
in the ALU, as in Chen's original method, but also for corruption in data memory
(because we also checkpoint the matrix $\vtr{A}$).
Let $\lambda_{a}$ be the rate of arithmetic errors, and   $\lambda_{m}$ be the
rate of memory errors.
For the latter, we have $\lambda_{m} = M \lambda_{word}$ if the data memory
consists of $M$ words, each susceptible to be corrupted with rate
$\lambda_{word}$.
Altogether, since the two error sources are independent, they have a cumulative
rate of $\lambda = \lambda_{a}+\lambda_{m}$, and the success probability for a
chunk is $q = e^{-\lambda T}$.

Plugging these values in~\eqref{eq:minimize} gives an optimization
formula very similar to that of Chen~\cite[Sec.~5.2]{Chen2013}, the only
difference being that we assume that the verification is error-free, which is
needed for the correctness of the approach.

\subsubsection{\abftdetect}

When using ABFT techniques, we detect possible errors every iteration,
so a chunk is a single iteration, and $T= T_{iter}$.
For \abftdetect, $T_{verif}$
is the overhead due to the checksums and redundant operations to detect a single error 
in the method.

\abftdetect can protect the application
from the same silent errors as \onlinechen, and just as before
the success probability for a chunk (a single iteration here) is
$q = e^{-\lambda T}$.

\subsubsection{\abftcorrect}

In addition to detection, we now correct single errors at every chunk.
Just as for \abftdetect, a chunk is a single iteration, and $T= T_{iter}$,
but $T_{verif}$ corresponds to a larger overhead, mainly due to the
extra checksums needed to detect two errors and correct a single one.

The main difference lies in the error rate. An iteration with \abftcorrect
is successful if zero or one error has struck during that iteration, so that the
success probability is much higher than for \onlinechen and \abftdetect.
We compute that value of the success probability as follows.
We have a Poisson process of rate $\lambda$, where
$\lambda = \lambda_{a}+\lambda_{m}$ as for
\onlinechen and \abftdetect.
The probability of exactly $k$ errors in time $T$ is $\frac{(\lambda T)^{k}}{k!} e^{-\lambda T}$~\cite{Mitzenmacher2005}, hence the probability of no error is
$e^{-\lambda T}$ and the probability of exactly one error is $\lambda T e^{-\lambda T}$,
so that $q =e^{-\lambda T} + \lambda T e^{-\lambda T}$.

\section{Experiments}
\label{sec.experiments}
\subsection{Setup}
There are two different sources of advantages in combining ABFT and
checkpointing.
First, the error detection capability lets us perform a cheap validation of the
partial result of each PCG step, recovering as soon as an error strikes.
Second, being able to correct single errors makes each step more resilient and
increases the expected number of consecutive valid iterations.
We say an iteration is valid if it is non-faulty, or if it suffers from a single
error that is corrected via ABFT.

For our experiments, we use a set of positive definite matrices from the UFL
Sparse Matrix Collection~\cite{florida}, with size between 17456 and 74752 and
density lower than $10^{-2}$.
We perform the experiments under Matlab and use the factored approximate inverse
preconditioners~\cite{Benzi1998,bzkt:01} in the PCG.
The application of these preconditioners requires two SpMxV, which are protected
against error using the methods proposed in Section~\ref{s:abftspmxv} (in all
methods \onlinechen, \abftdetect, and \abftcorrect).

At each iteration of PCG, faults are injected during vector and matrix-vector
operations but, since we are assuming selective reliability, all the checksums
and checksum operations are considered non-faulty.
Faults are modeled as bit flips occurring independently at each step, under an
exponential distribution of parameter $\lambda$, as detailed in
Section~\ref{sec.model.CG}.
These bit flips can strike either the matrix (the elements of $\vtxt{Val},
\vtxt{Colid}$ and $\vtxt{Rowidx}$), or any entry of the PCG vectors
$\vtr{r}_i$, $\vtr{z}_i$, $\vtr{q}$, $\vtr{p}_i$ or $\vtr{x}_i$.
We chose not to inject errors during the computation explicitly, as they are
just a special case of error we are considering.
Moreover, to simplify the injection mechanism, $T_{iter}$ is normalized to be
one, meaning that each memory location or operation is given the chance to fail
just once per iteration~\cite{Sao2013}.
Finally, to get data that are homogeneous among the test matrices, the fault
rate $\lambda$ is chosen to be inversely proportional to $M$ (memory size) with
a proportionality constant \mbox{$\alpha \in \left( 0, 1 \right)$}; this makes
sense as larger the memory used by an application, larger is the chance to have
an error.
It follows that the expected number of PCG iterations between two distinct fault
occurrences does not depend either on the size or on the sparsity ratio of the
matrix.

We compare the performance of three algorithms, namely \onlinechen, \abftdetect
(single detection and rolling back as soon as an error is detected), and
\abftcorrect (correcting single errors during a given step and rolling back only
if two errors strike a single operation).
We instantiate them by limiting the maximum number of PCG steps to 50
(20 for \#924, whose convergence is sublinear) and setting the tolerance
parameter $\epsilon$ at line 5 of Algorithm~\ref{alg:cg} to $10^{-14}$.
The number of iterations for a non-faulty execution and the achieved accuracy
are detailed in Table~\ref{tab:matrices}.

\begin{table}
  \centering
  \caption{Test matrices used in the experiments. Name and id are from the University of Florida Sparse Matrix Collection.}
  \begin{tabular}[t]{lrrrrr}
    \toprule
    name & id & size & density & steps & residual \\
    \midrule
    Boeing/bcsstk36 & 341 & 23052 & 2.15e-03 & 50 & 6.41e-04 \\
    Mulvey/finan512 & 752 & 74752 & 1.07e-04 & 25 & 2.19e-14 \\
    Andrews/Andrews & 924 & 60000 & 2.11e-04 & 20 & 1.59e-04 \\
    GHS\_psdef/wathen100 & 1288 & 30401 & 5.10e-04 & 50 & 2.55e-13 \\
    GHS\_psdef/wathen120 & 1289 & 36441 & 4.26e-04 & 50 & 9.16e-14 \\
    GHS\_psdef/gridgena & 1311 & 48962 & 2.14e-04 & 50 & 5.61e-05 \\
    GHS\_psdef/jnlbrng1 & 1312 & 40000 & 1.24e-04 & 50 & 5.83e-13 \\
    UTEP/Dubcova2 & 1848 & 65025 & 2.44e-04 & 50 & 1.16e-05 \\
    JGD\_Trefethen/Trefethen\_20000 & 2213 & 20000 & 1.39e-03 & 10 & 6.00e-16 \\
    \bottomrule
  \end{tabular}
  \label{tab:matrices}
\end{table}

Implementing the null checks in Algorithm~\ref{alg:ssr-shift},
Algorithm~\ref{alg:ssr-split} and Algorithm~\ref{alg:ssr-multiple-shift}
poses a challenge.
The comparison $\vtr{d}_r = 0$ is between two integers, and can be correctly
evaluated by any programming language using the equality check.
However, the other two, having floating point operands, are problematic.
Since the floating point operations are not associative and the distributive
property does not hold, we need a tolerance parameter that takes into account
the rounding operations that are performed by each floating point operation.
Here, we give an upper bound on the difference between the two floating point
checksums, using the standard model~\cite[Sec.~2.2]{higham} to make sure
that errors caught by our algorithms really are errors and not merely
inaccuracies due to floating point operations (which is tolerable, as non-faulty
executions can give rise to the same inaccuracy).

\begin{theorem}[Accuracy of the floating point weighted checksums]
\label{th:gen-fp-bounds}
Let $\vtr{A} \in \R{n \times n}$, $\vtr{x} \in \R{n}$, $\vtr{c} \in \R{n}$.
If all of the sums involved into the matrix operations are performed using some
flavor of recursive summation~\cite[Ch.~4]{higham}, it holds that
\begin{equation}
\label{eq:gen-fp-bound-ssr}
| \fl\left(\left(\trans{\vtr{c}}\vtr{A}\right)\vtr{x}\right) - \fl\left(\trans{\vtr{c}}\left(\vtr{A}\vtr{x}\right)\right) |
\leq
2 \; \gamma_{2n}\; | \trans{\vtr{c}}|\;\mid\vtr{A} \mid\;\mid\vtr{x}\mid\;.
\end{equation}
\label{th:fp2}
\end{theorem}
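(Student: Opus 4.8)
The plan is to reduce the claim to a standard componentwise rounding-error analysis of the two evaluation orders for the triple product $\trans{\vtr{c}}\vtr{A}\vtr{x}$, and then to combine the resulting one-sided bounds through the exact value. Write $S = \sum_{i=1}^{n}\sum_{j=1}^{n} c_i a_{i,j} x_j$ for the exact scalar $\trans{\vtr{c}}\vtr{A}\vtr{x}$. The key observation is that both $\fl\bigl((\trans{\vtr{c}}\vtr{A})\vtr{x}\bigr)$ and $\fl\bigl(\trans{\vtr{c}}(\vtr{A}\vtr{x})\bigr)$ are formed as nested inner products, each of length at most $n$, so every term $c_i a_{i,j} x_j$ of $S$ acquires an accumulated rounding factor that can be controlled by the standard $\gamma$ notation.

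First I would recall from the standard model~\cite[Sec.~2.2]{higham} that each elementary operation satisfies $\fl(a\,\mathrm{op}\,b) = (a\,\mathrm{op}\,b)(1+\delta)$ with $|\delta|\le u$, and from the analysis of recursive summation~\cite[Ch.~4]{higham} that a dot product of length at most $n$ returns $\sum_i u_i v_i (1+\theta_i)$ with $|\theta_i|\le\gamma_n$, where $\gamma_n = nu/(1-nu)$. I would also invoke the composition lemma~\cite[Ch.~3]{higham} that a product of at most $m$ factors of the form $(1\pm\delta)$ equals $1+\theta$ with $|\theta|\le\gamma_m$; in particular $\gamma_n+\gamma_n+\gamma_n^2 \le \gamma_{2n}$.

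Next I would analyze the first evaluation order. Forming the row vector $\trans{\vtr{c}}\vtr{A}$ entrywise gives $\fl\bigl((\trans{\vtr{c}}\vtr{A})_j\bigr) = \sum_{i=1}^{n} c_i a_{i,j}(1+\theta_{i,j})$ with $|\theta_{i,j}|\le\gamma_n$, and the subsequent dot product with $\vtr{x}$ attaches a further factor $(1+\psi_j)$, $|\psi_j|\le\gamma_n$, to the $j$-th term. Hence each coefficient $c_i a_{i,j} x_j$ is multiplied by $(1+\theta_{i,j})(1+\psi_j)=1+\eta_{i,j}$, a product of at most $2n$ elementary factors, so $|\eta_{i,j}|\le\gamma_{2n}$. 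Subtracting $S$ and bounding termwise yields
\[
\bigl|\fl\bigl((\trans{\vtr{c}}\vtr{A})\vtr{x}\bigr) - S\bigr|
\;\le\; \gamma_{2n}\sum_{i,j} |c_i|\,|a_{i,j}|\,|x_j|
\;=\; \gamma_{2n}\,|\trans{\vtr{c}}|\,|\vtr{A}|\,|\vtr{x}|.
\]
The second order, $\trans{\vtr{c}}(\vtr{A}\vtr{x})$, is handled by the symmetric argument: forming $\vtr{A}\vtr{x}$ rowwise and then contracting with $\trans{\vtr{c}}$ attaches factors in the opposite order but of the same lengths, so it satisfies the identical bound relative to $S$.

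Finally I would combine the two one-sided estimates through the exact value $S$ by the triangle inequality, which produces exactly the factor of $2$ in~\eqref{eq:gen-fp-bound-ssr}. The only bookkeeping subtleties are to note that the sparse (CSR) evaluation merely shortens each inner product, so the length-$n$ bounds $\gamma_n$ stay valid (conservatively), and to carry out the factor count precisely. I expect this composition step — attributing at most $2n$ rounding factors to each term and bounding their product by $\gamma_{2n}$ rather than a looser $\gamma_n$ per stage — to be the main obstacle, since a careless count would miss the stated constant.
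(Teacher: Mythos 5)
Your proof is correct. Note that the paper does not actually contain a proof of this theorem---it explicitly defers to the companion technical report (\cite[Theorem~2]{ens-report})---so there is no in-paper argument to compare against; your route is the standard one that the very form of the bound in~\eqref{eq:gen-fp-bound-ssr} calls for: bound each parenthesization against the exact value $\trans{\vtr{c}}\vtr{A}\vtr{x}$ by attributing at most $2n$ rounding factors to each term $c_i a_{i,j} x_j$ (at most $n$ from each of the two nested inner products, combined via $\gamma_n+\gamma_n+\gamma_n^2\le\gamma_{2n}$), and then recover the factor $2$ by the triangle inequality through the exact value. Your bookkeeping at the key step is accurate, and the closing remark that CSR evaluation only shortens the inner products (so the $\gamma_{2n}$ bound remains conservatively valid) is a correct and worthwhile observation.
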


We refer the reader to the technical report for the proof~\cite[Theorem 2]{ens-report}.
Let us note that if all of the entries of $\vtr{c}$ are positive, as it is often
the case in our setting, the absolute value of $\vtr{c}$
in~\eqref{eq:gen-fp-bound-ssr} can be safely replaced with $\vtr{c}$ itself.
It is also clear that these bounds are not computable, since
$\trans{\vtr{c}} \mid \vtr{A} \mid \; \mid \vtr{x} \mid$ is not, in general, a
floating point number.
This problem can be alleviated by overestimating the bound by means of matrix
and vector norms.

Since we are interested in actually computing the bound at runtime, we
consider a weaker bound.
Recalling that~\cite[Sec. B.7]{higham-fom}
\begin{equation}
\label{eq:matrix-inf-norm}
\norm{\vtr{A}}_1 = \max_{1 \leq j \leq n} \sum_{i=1}^{n} | a_{i,j} |.
\end{equation}
we can upper bound the right hand side in so that
\begin{equation}
\label{eq:fp-computable-bound}
|\fl\left(\left(\trans{\vtr{c}}\vtr{A}\right)\vtr{x}\right) - \fl\left(\trans{\vtr{c}}\left(\vtr{A}\vtr{x}\right)\right) |
\leq
2\;\gamma_{2n} \; n \; \norm{\trans{\vtr{c}}}_\infty \norm{\vtr{A}}_1 \norm{\vtr{x}}_\infty\;.
\end{equation}
Though the right hand side of~\eqref{eq:fp-computable-bound} is not exactly
computable in floating point arithmetic, it requires an amount of operations
dramatically smaller than~\eqref{eq:gen-fp-bound-ssr}; just a few sums for the
norm of $\vtr{A}$.
As this norm is usually computed using the identity
in~\eqref{eq:matrix-inf-norm}, any kind of summation yields a relative error of
at most $n^\prime\vtr{u}$~\cite[Sec.~4.6]{higham}, where $n^\prime$ is the
maximum number of nonzeros in a column of $\vtr{A}$, and $\vtr{u}$ is the
machine epsilon.
Since we are dealing with sparse matrices, we expect $n^\prime$ to be very
small, and hence the computation of the norm to be accurate.
Moreover, since the right hand side in~\eqref{eq:fp-computable-bound} does not
depend on $\vtr{x}$, it can be computed just once for a given matrix and weight
vector.

Clearly, using~\eqref{eq:fp-computable-bound} as tolerance parameter guarantees
no false positive (a computation without any error that is considered as
faulty), but allows false negatives (an iteration during which an error occurs
without being detected) when the perturbations of the result are small.
Nonetheless, this solution works almost perfectly in practice, meaning that
though the convergence rate can be slowed down, the algorithms still converges
towards the ``correct'' answer.
Though such an outcome could be surprising at first, Elliott et
al.~\cite{fp-quant, fp-quant-rep} showed that bit flips that strike the less
significant digits of the floating point representation of vector elements
during a dot product create small perturbations of the results, and that the
magnitude of this perturbation gets smaller as the size of the vectors
increases.
Hence, we expect errors that are not detected by our tolerance threshold to be
too small to impact the solution of the linear solver.

\begin{table}
\caption{Experimental validation of the model. Here $\widetilde{s_i}$ and $s^*_i$ represent the best checkpointing interval according to our model and to our simulations respectively, whereas $\et{\widetilde{s_i}}$ and $\et{s^*_i}$ stand for the execution time of the algorithm using these checkpointing intervals.}
\begin{tabular}{r|rrrrr|rrrrr}
\toprule
id & $\widetilde{s_1}$ & $\et{\widetilde{s_1}}$ & $s_1^*$ & $\et{s_1^*}$ & $l_1$ & $\widetilde{s_2}$ & $\et{\widetilde{s_2}}$ & $s_2^*$ & $\et{s_2^*}$ & $l_2$ \\ 
\midrule
341 & 4 & 305.42 & 1 & 293.22 & \bf{4.16} & 4 & 305.45 & 1 & 293.16 & \bf{4.19} \\
752 & 30 & 13.81 & 24 & 13.34 & \bf{3.57} & 24 & 12.17 & 23 & 11.93 & \bf{2.01} \\
924 & 23 & 49.82 & 30 & 47.53 & \bf{4.82} &  23 & 44.52 & 26 & 42.42 & \bf{4.96} \\
1288 & 22 & 11.12 & 19 & 10.82 & \bf{2.72} & 22 & 11.32 & 19 & 11.03 & \bf{2.58} \\
1289 & 16 & 16.56 & 13 & 16.38 & \bf{1.07} & 23 & 13.49 & 23 & 13.49 & \bf{0.00} \\
1311 & 4 & 216.70 & 1 & 207.97 & \bf{4.19} & 4 & 220.20 & 1 & 208.19 & \bf{5.77} \\
1312 & 25 & 14.41 & 22 & 13.86 & \bf{3.97} & 23 & 12.30 & 22 & 12.06 & \bf{1.96 }\\
1848 & 4 & 321.70 & 1 & 309.28 & \bf{4.01} & 4 & 366.03 & 1 & 314.20 & \bf{16.49} \\
2213 & 19 & 2.31 & 12 & 2.19 & \bf{5.58} & 24 & 2.33 & 23 & 2.20 & \bf{5.94} \\
\bottomrule
\end{tabular}
\label{tab:exp12}
\end{table}

\subsection{Simulations}
To validate the model, we perform the simulation whose results are illustrated
in Table~\ref{tab:exp12}.
For each matrix, we set $\lambda = \frac{1}{16\;M}$ and consider the average
execution time of 100 repetitions of both \abftdetect (columns 5-8) and
\abftcorrect (columns 6-9). In the table we record the checkpointing interval
$s^*_i$ which achieves the shortest execution time $\et{s_1^*}$, and the
checkpointing interval $\widetilde{s_i}$ which is the best stepsize according to
our method, along with its execution time $\et{\widetilde{s_i}}$.
Finally, we evaluate the performance of our guess by means of the quantity
\[
l_i = \dfrac{\et{\widetilde{s_i}} - \et{s^*_i}}{\et{s^*_i}} \cdot 100\;,
\]
that expresses the loss, in terms of execution time, of executing with the
checkpointing interval given by our model with respect to the best possible
choice.

From the table, we clearly see that the values of $\widetilde{s_i}$ and $s^*_i$ are
close, since the time loss reaches just above 5\% for $l_1$ and just below 15\% for $l_2$.
This sometimes poor result depends just on the small number of repetitions we
are considering, that leads to the presence of outliers, lucky runs in which a
small number of errors occur and the computation is carried on in a much quicker
way.
Similar results hold for other values of $\lambda$.

\pgfplotsset{
	every axis/.append style={font=\tiny},
}
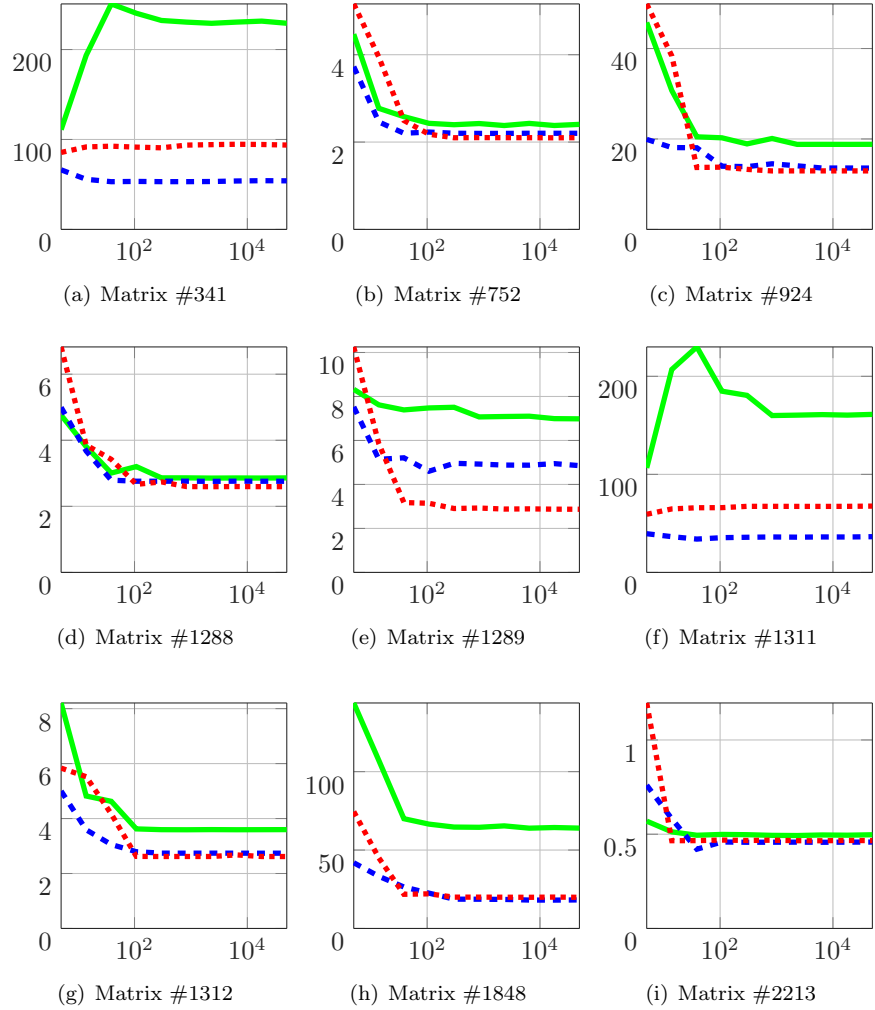
\begin{figure}
\centering
\subfigure[Matrix \#341]{
%
%
\begin{tikzpicture}

\begin{axis}[%
width=3cm,
height=3cm,
compat=newest,
plot coordinates/math parser=false,
trim axis left,
scale only axis,
separate axis lines,
every outer x axis line/.append style={white!15!black},
every x tick label/.append style={font=\color{white!15!black}},
xmode=log,
xmin=5,
xmax=50000,
xminorticks=true,
xmajorgrids,
xminorgrids,
every outer y axis line/.append style={white!15!black},
every y tick label/.append style={font=\color{white!15!black}},
yticklabel=\parbox{\widthof{200}}{\flushright\pgfmathprintnumber{\tick}},
ymin=0,
ymax=250.7069,
ymajorgrids,
legend style={draw=white!15!black,fill=white,legend cell align=left}
]
\addplot [color=green,solid,line width=2.0pt,forget plot]
  table[row sep=crcr]{5	110.8497\\
13.9127970110356	193.584\\
38.7131841340563	250.7069\\
107.721734501594	240.3342\\
299.74212515947	232.3217\\
834.05026860003	230.4736\\
2320.79441680639	229.1457\\
6457.74832507442	230.497\\
17969.0683190231	231.589\\
50000	229.2124\\
};
\addplot [color=blue,dashed,line width=2.0pt,forget plot]
  table[row sep=crcr]{5	66.24496\\
13.9127970110356	55.69611\\
38.7131841340563	53.05199\\
107.721734501594	53.35719\\
299.74212515947	53.1016\\
834.05026860003	53.08722\\
2320.79441680639	53.25253\\
6457.74832507442	53.84089\\
17969.0683190231	54.17459\\
50000	54.01631\\
};
\addplot [color=red,dotted,line width=2.0pt,forget plot]
  table[row sep=crcr]{5	85.57285\\
13.9127970110356	91.70231\\
38.7131841340563	92.52328\\
107.721734501594	91.4892\\
299.74212515947	90.73405\\
834.05026860003	93.73865\\
2320.79441680639	94.25548\\
6457.74832507442	94.66757\\
17969.0683190231	94.43095\\
50000	93.79567\\
};
\end{axis}
\end{tikzpicture}
\subfigure[Matrix \#752]{
%
%
\begin{tikzpicture}

\begin{axis}[%
width=3cm,
height=3cm,
scale only axis,
separate axis lines,
every outer x axis line/.append style={white!15!black},
every x tick label/.append style={font=\color{white!15!black}},
yticklabel=\parbox{\widthof{200}}{\flushright\pgfmathprintnumber{\tick}},
xmode=log,
xmin=5,
xmax=50000,
xminorticks=true,
xmajorgrids,
xminorgrids,
every outer y axis line/.append style={white!15!black},
every y tick label/.append style={font=\color{white!15!black}},
ymin=0,
ymax=5.163034,
ymajorgrids,
legend style={draw=white!15!black,fill=white,legend cell align=left}
]
\addplot [color=green,solid,line width=2.0pt,forget plot]
  table[row sep=crcr]{5	4.470645\\
13.9127970110356	2.772222\\
38.7131841340563	2.582038\\
107.721734501594	2.428736\\
299.74212515947	2.397215\\
834.05026860003	2.424055\\
2320.79441680639	2.378444\\
6457.74832507442	2.427203\\
17969.0683190231	2.3792\\
50000	2.405181\\
};
\addplot [color=blue,dashed,line width=2.0pt,forget plot]
  table[row sep=crcr]{5	3.727595\\
13.9127970110356	2.460942\\
38.7131841340563	2.20191\\
107.721734501594	2.228859\\
299.74212515947	2.201856\\
834.05026860003	2.197856\\
2320.79441680639	2.198413\\
6457.74832507442	2.201384\\
17969.0683190231	2.201223\\
50000	2.200705\\
};
\addplot [color=red,dotted,line width=2.0pt,forget plot]
  table[row sep=crcr]{5	5.163034\\
13.9127970110356	3.946561\\
38.7131841340563	2.494251\\
107.721734501594	2.17914\\
299.74212515947	2.09878\\
834.05026860003	2.102445\\
2320.79441680639	2.100392\\
6457.74832507442	2.100209\\
17969.0683190231	2.097353\\
50000	2.100097\\
};
\end{axis}
\end{tikzpicture}
\subfigure[Matrix \#924]{
%
%
\begin{tikzpicture}

\begin{axis}[%
width=3cm,
height=3cm,
scale only axis,
separate axis lines,
every outer x axis line/.append style={white!15!black},
every x tick label/.append style={font=\color{white!15!black}},
xmode=log,
xmin=5,
xmax=50000,
xminorticks=true,
xmajorgrids,
xminorgrids,
every outer y axis line/.append style={white!15!black},
every y tick label/.append style={font=\color{white!15!black}},
yticklabel=\parbox{\widthof{200}}{\flushright\pgfmathprintnumber{\tick}},
ymin=0,
ymax=49.86397,
ymajorgrids,
legend style={draw=white!15!black,fill=white,legend cell align=left}
]
\addplot [color=green,solid,line width=2.0pt,forget plot]
  table[row sep=crcr]{5	45.76235\\
13.9127970110356	30.71469\\
38.7131841340563	20.47237\\
107.721734501594	20.2565\\
299.74212515947	18.90951\\
834.05026860003	20.10616\\
2320.79441680639	18.80212\\
6457.74832507442	18.81354\\
17969.0683190231	18.80439\\
50000	18.81719\\
};
\addplot [color=blue,dashed,line width=2.0pt,forget plot]
  table[row sep=crcr]{5	19.91506\\
13.9127970110356	18.09958\\
38.7131841340563	18.0521\\
107.721734501594	13.97381\\
299.74212515947	13.88631\\
834.05026860003	14.51295\\
2320.79441680639	14.06095\\
6457.74832507442	13.57293\\
17969.0683190231	13.57647\\
50000	13.56314\\
};
\addplot [color=red,dotted,line width=2.0pt,forget plot]
  table[row sep=crcr]{5	49.86397\\
13.9127970110356	38.37939\\
38.7131841340563	13.69599\\
107.721734501594	13.76709\\
299.74212515947	13.27663\\
834.05026860003	12.93338\\
2320.79441680639	12.94337\\
6457.74832507442	12.94328\\
17969.0683190231	12.93829\\
50000	12.94006\\
};
\end{axis}
\end{tikzpicture}

\subfigure[Matrix \#1288]{
%
%
\begin{tikzpicture}

\begin{axis}[%
width=3cm,
height=3cm,
scale only axis,
separate axis lines,
every outer x axis line/.append style={white!15!black},
every x tick label/.append style={font=\color{white!15!black}},
xmode=log,
xmin=5,
xmax=50000,
xminorticks=true,
xmajorgrids,
xminorgrids,
every outer y axis line/.append style={white!15!black},
every y tick label/.append style={font=\color{white!15!black}},
yticklabel=\parbox{\widthof{200}}{\flushright\pgfmathprintnumber{\tick}},
ymin=0,
ymax=6.825187,
ymajorgrids,
legend style={draw=white!15!black,fill=white,legend cell align=left}
]
\addplot [color=green,solid,line width=2.0pt,forget plot]
  table[row sep=crcr]{5	4.760459\\
13.9127970110356	3.806762\\
38.7131841340563	3.007971\\
107.721734501594	3.203419\\
299.74212515947	2.862017\\
834.05026860003	2.862112\\
2320.79441680639	2.85504\\
6457.74832507442	2.858985\\
17969.0683190231	2.856108\\
50000	2.860831\\
};
\addplot [color=blue,dashed,line width=2.0pt,forget plot]
  table[row sep=crcr]{5	4.996113\\
13.9127970110356	3.67325\\
38.7131841340563	2.79205\\
107.721734501594	2.760473\\
299.74212515947	2.760369\\
834.05026860003	2.761046\\
2320.79441680639	2.756594\\
6457.74832507442	2.761188\\
17969.0683190231	2.759242\\
50000	2.758721\\
};
\addplot [color=red,dotted,line width=2.0pt,forget plot]
  table[row sep=crcr]{5	6.825187\\
13.9127970110356	3.855426\\
38.7131841340563	3.42224\\
107.721734501594	2.664788\\
299.74212515947	2.740511\\
834.05026860003	2.598266\\
2320.79441680639	2.594399\\
6457.74832507442	2.599092\\
17969.0683190231	2.59557\\
50000	2.59562\\
};
\end{axis}
\end{tikzpicture}
\subfigure[Matrix \#1289]{
%
%
\begin{tikzpicture}

\begin{axis}[%
width=3cm,
height=3cm,
scale only axis,
separate axis lines,
every outer x axis line/.append style={white!15!black},
every x tick label/.append style={font=\color{white!15!black}},
xmode=log,
xmin=5,
xmax=50000,
xminorticks=true,
xmajorgrids,
xminorgrids,
every outer y axis line/.append style={white!15!black},
every y tick label/.append style={font=\color{white!15!black}},
yticklabel=\parbox{\widthof{200}}{\flushright\pgfmathprintnumber{\tick}},
ymin=0,
ymax=10.253385,
ymajorgrids,
legend style={draw=white!15!black,fill=white,legend cell align=left}
]
\addplot [color=green,solid,line width=2.0pt,forget plot]
  table[row sep=crcr]{5	8.325613\\
13.9127970110356	7.618277\\
38.7131841340563	7.38914\\
107.721734501594	7.478978\\
299.74212515947	7.508952\\
834.05026860003	7.070876\\
2320.79441680639	7.089236\\
6457.74832507442	7.103107\\
17969.0683190231	6.992414\\
50000	6.983612\\
};
\addplot [color=blue,dashed,line width=2.0pt,forget plot]
  table[row sep=crcr]{5	7.529211\\
13.9127970110356	5.141338\\
38.7131841340563	5.211812\\
107.721734501594	4.599863\\
299.74212515947	4.960965\\
834.05026860003	4.926715\\
2320.79441680639	4.880859\\
6457.74832507442	4.875986\\
17969.0683190231	4.946716\\
50000	4.85486\\
};
\addplot [color=red,dotted,line width=2.0pt,forget plot]
  table[row sep=crcr]{5	10.253385\\
13.9127970110356	5.852135\\
38.7131841340563	3.176506\\
107.721734501594	3.143744\\
299.74212515947	2.905742\\
834.05026860003	2.927444\\
2320.79441680639	2.880793\\
6457.74832507442	2.889643\\
17969.0683190231	2.875488\\
50000	2.871864\\
};
\end{axis}
\end{tikzpicture}
\subfigure[Matrix \#1311]{
%
%
\begin{tikzpicture}

\begin{axis}[%
width=3cm,
height=3cm,
scale only axis,
separate axis lines,
every outer x axis line/.append style={white!15!black},
every x tick label/.append style={font=\color{white!15!black}},
xmode=log,
xmin=5,
xmax=50000,
xminorticks=true,
xmajorgrids,
xminorgrids,
every outer y axis line/.append style={white!15!black},
every y tick label/.append style={font=\color{white!15!black}},
yticklabel=\parbox{\widthof{200}}{\flushright\pgfmathprintnumber{\tick}},
ymin=0,
ymax=229.9679,
ymajorgrids,
legend style={draw=white!15!black,fill=white,legend cell align=left}
]
\addplot [color=green,solid,line width=2.0pt,forget plot]
  table[row sep=crcr]{5	106.5526\\
13.9127970110356	206.9635\\
38.7131841340563	229.9679\\
107.721734501594	184.835\\
299.74212515947	180.5717\\
834.05026860003	159.9832\\
2320.79441680639	160.3948\\
6457.74832507442	160.9335\\
17969.0683190231	160.4228\\
50000	161.1698\\
};
\addplot [color=blue,dashed,line width=2.0pt,forget plot]
  table[row sep=crcr]{5	39.46612\\
13.9127970110356	36.45868\\
38.7131841340563	33.99827\\
107.721734501594	35.43228\\
299.74212515947	35.83721\\
834.05026860003	36.1478\\
2320.79441680639	35.94009\\
6457.74832507442	36.16732\\
17969.0683190231	36.11694\\
50000	36.42013\\
};
\addplot [color=red,dotted,line width=2.0pt,forget plot]
  table[row sep=crcr]{5	59.27672\\
13.9127970110356	64.74738\\
38.7131841340563	65.93721\\
107.721734501594	66.04662\\
299.74212515947	67.37543\\
834.05026860003	67.38313\\
2320.79441680639	67.33386\\
6457.74832507442	67.31405\\
17969.0683190231	67.37975\\
50000	67.61867\\
};
\end{axis}
\end{tikzpicture}

\subfigure[Matrix \#1312]{
%
%
\begin{tikzpicture}

\begin{axis}[%
width=3cm,
height=3cm,
scale only axis,
separate axis lines,
every outer x axis line/.append style={white!15!black},
every x tick label/.append style={font=\color{white!15!black}},
xmode=log,
xmin=5,
xmax=50000,
xminorticks=true,
xmajorgrids,
xminorgrids,
every outer y axis line/.append style={white!15!black},
every y tick label/.append style={font=\color{white!15!black}},
yticklabel=\parbox{\widthof{200}}{\flushright\pgfmathprintnumber{\tick}},
ymin=0,
ymax=8.205841,
ymajorgrids,
legend style={draw=white!15!black,fill=white,legend cell align=left}
]
\addplot [color=green,solid,line width=2.0pt,forget plot]
  table[row sep=crcr]{5	8.205841\\
13.9127970110356	4.819749\\
38.7131841340563	4.626551\\
107.721734501594	3.622984\\
299.74212515947	3.595087\\
834.05026860003	3.591684\\
2320.79441680639	3.597549\\
6457.74832507442	3.592552\\
17969.0683190231	3.594649\\
50000	3.596816\\
};
\addplot [color=blue,dashed,line width=2.0pt,forget plot]
  table[row sep=crcr]{5	5.000844\\
13.9127970110356	3.590721\\
38.7131841340563	3.054736\\
107.721734501594	2.792486\\
299.74212515947	2.740951\\
834.05026860003	2.739635\\
2320.79441680639	2.740963\\
6457.74832507442	2.739221\\
17969.0683190231	2.73822\\
50000	2.739692\\
};
\addplot [color=red,dotted,line width=2.0pt,forget plot]
  table[row sep=crcr]{5	5.835808\\
13.9127970110356	5.523221\\
38.7131841340563	4.166796\\
107.721734501594	2.622847\\
299.74212515947	2.617927\\
834.05026860003	2.619285\\
2320.79441680639	2.618421\\
6457.74832507442	2.676352\\
17969.0683190231	2.617823\\
50000	2.617789\\
};
\end{axis}
\end{tikzpicture}
\subfigure[Matrix \#1848]{
%
%
\begin{tikzpicture}

\begin{axis}[%
width=3cm,
height=3cm,
scale only axis,
separate axis lines,
every outer x axis line/.append style={white!15!black},
every x tick label/.append style={font=\color{white!15!black}},
xmode=log,
xmin=5,
xmax=50000,
xminorticks=true,
xmajorgrids,
xminorgrids,
every outer y axis line/.append style={white!15!black},
every y tick label/.append style={font=\color{white!15!black}},
yticklabel=\parbox{\widthof{200}}{\flushright\pgfmathprintnumber{\tick}},
ymin=0,
ymax=143.7932,
ymajorgrids,
legend style={draw=white!15!black,fill=white,legend cell align=left}
]
\addplot [color=green,solid,line width=2.0pt,forget plot]
  table[row sep=crcr]{5	143.7932\\
13.9127970110356	107.1593\\
38.7131841340563	69.98727\\
107.721734501594	66.53503\\
299.74212515947	64.65892\\
834.05026860003	64.50634\\
2320.79441680639	65.41817\\
6457.74832507442	63.95319\\
17969.0683190231	64.3641\\
50000	63.98372\\
};
\addplot [color=blue,dashed,line width=2.0pt,forget plot]
  table[row sep=crcr]{5	41.90268\\
13.9127970110356	33.26128\\
38.7131841340563	26.43955\\
107.721734501594	22.53363\\
299.74212515947	18.6822\\
834.05026860003	18.64246\\
2320.79441680639	18.50986\\
6457.74832507442	18.11247\\
17969.0683190231	18.07819\\
50000	18.1938\\
};
\addplot [color=red,dotted,line width=2.0pt,forget plot]
  table[row sep=crcr]{5	74.54383\\
13.9127970110356	45.00674\\
38.7131841340563	21.78333\\
107.721734501594	22.0833\\
299.74212515947	19.96332\\
834.05026860003	19.97182\\
2320.79441680639	19.92554\\
6457.74832507442	19.86522\\
17969.0683190231	20.0004\\
50000	19.8685\\
};
\end{axis}
\end{tikzpicture}
\subfigure[Matrix \#2213]{
%
%
\begin{tikzpicture}

\begin{axis}[%
width=3cm,
height=3cm,
scale only axis,
separate axis lines,
every outer x axis line/.append style={white!15!black},
every x tick label/.append style={font=\color{white!15!black}},
yticklabel=\parbox{\widthof{200}}{\flushright\pgfmathprintnumber{\tick}},
xmode=log,
xmin=5,
xmax=50000,
xminorticks=true,
xmajorgrids,
xminorgrids,
every outer y axis line/.append style={white!15!black},
every y tick label/.append style={font=\color{white!15!black}},
ymin=0,
ymax=1.196284,
ymajorgrids,
legend style={draw=white!15!black,fill=white,legend cell align=left}
]
\addplot [color=green,solid,line width=2.0pt,forget plot]
  table[row sep=crcr]{5	0.5694746\\
13.9127970110356	0.5126294\\
38.7131841340563	0.494519\\
107.721734501594	0.4988982\\
299.74212515947	0.4974604\\
834.05026860003	0.4939344\\
2320.79441680639	0.4930246\\
6457.74832507442	0.4959968\\
17969.0683190231	0.494784\\
50000	0.4971064\\
};
\addplot [color=blue,dashed,line width=2.0pt,forget plot]
  table[row sep=crcr]{5	0.7601928\\
13.9127970110356	0.581854\\
38.7131841340563	0.420455\\
107.721734501594	0.4587214\\
299.74212515947	0.457309\\
834.05026860003	0.4574682\\
2320.79441680639	0.4578796\\
6457.74832507442	0.4571722\\
17969.0683190231	0.4578532\\
50000	0.4583394\\
};
\addplot [color=red,dotted,line width=2.0pt,forget plot]
  table[row sep=crcr]{5	1.196284\\
13.9127970110356	0.4655646\\
38.7131841340563	0.465943\\
107.721734501594	0.469014\\
299.74212515947	0.4667292\\
834.05026860003	0.4682564\\
2320.79441680639	0.4655632\\
6457.74832507442	0.4653268\\
17969.0683190231	0.4669068\\
50000	0.4659178\\
};
\end{axis}
\end{tikzpicture}
\caption{Execution time in seconds ($y$ axis) of \onlinechen (\textcolor{red}{dotted}), \abftdetect (\textcolor{green}{solid line}) and \abftcorrect (\textcolor{blue}{dashed})  with respect to the normalized MTBF ($x$-axis). The matrix number is in the subcaption.}
\label{fig:exp3}
\end{figure}

We also compare the execution time of the three algorithms to empirically asses how much their relative performance depend on the fault rate. The results on our test matrices are shown in Fig.~\ref{fig:exp3}, where the y-axis is the execution time (in seconds), and the x-axis is the normalized mean time between failure (the reciprocal of $\alpha$). Here, the larger $x = \frac{1}{\alpha}$, the smaller the corresponding value of $\lambda = \frac{\alpha}{M}$, hence the smaller the expected number of errors.
For each value of $\lambda$, we draw the average execution time of 50 runs of the three algorithms, using the best checkpointing interval predicted in 
Section~\ref{sec.model.general} for \abftdetect and \abftcorrect,
and by \mbox{Chen~\cite[Eq. 10]{Chen2013}} for \onlinechen. In terms of execution time, Chen's method is comparable to ours for middle to high fault rates, since it clearly outperforms \abftdetect in five out of nine cases, being slightly faster than \abftcorrect for lower fault rates.

Intuitively, this behavior is not surprising. When $\lambda$ is large, many errors occur but, since $\alpha$ is between zero and one, we always have, in expectation, less than one error per iteration. Thus \abftcorrect requires fewer checkpoints than \abftdetect and almost no rollback, and this compensates for the slightly longer execution time of a single step. When the fault rate is very low, instead, the algorithms perform almost the same number of iterations, but \abftcorrect takes slightly longer due to the additional \mbox{dot products} at each step.

Altogether, the results show that \abftcorrect outperforms both \onlinechen and \abftdetect for a wide range of fault rates, thereby demonstrating that combining checkpointing with ABFT correcting techniques is more efficient than pure checkpointing for most practical situations.

\section{Conclusion}
\label{sec.conclusion}
We consider the problem of silent errors in iterative linear systems solvers.
At first, we focus our attention on ABFT methods for SpMxV, developing
algorithms able to detect and correct errors in both memory and computation
using various checksumming techniques. Then, we combine ABFT with replication,
in order to develop a resilient PCG kernel that can protect axpy's and dot
products as well.
We also discuss how to take numerical issues into account when dealing with
actual implementations.
These methods are a worthy choice for a selective reliability model, since most
of the operations can be performed in unreliable mode, whereas only checksum
computations need to be performed reliably.

In addition, we examine checkpointing techniques as a tool to improve the
resilience of our ABFT PCG and develop a model to trade-off the checkpointing
interval so to achieve the shortest execution time in expectation.
We implement two of the possible combinations, namely an algorithm that relies
on roll back as soon as an error is detected, and one that is able to correct a
single error and recovers from a checkpoint just when two errors strike.
We validate the model by means of simulations and finally compare our algorithms
with Chen's approach, empirically showing that ABFT overhead is usually smaller
than Chen's verification cost.

We expect this combined approach to be interesting for other variants of the
preconditioned conjugate gradient algorithm~\cite{saad}.
Triangular preconditioners seem to be particularly attracting, in that it looks
possible to treat them by adapting the techniques described in this paper
(Shantharam et al.~\cite{Shantharam2012} addressed the triangular case).

\section*{Acknowledgements}
Y.~Robert and B.~U\c{c}ar were partly supported by the French Research Agency
(ANR) through the Rescue and SOLHAR (ANR MONU-13-0007) projects.  Y.~Robert is
with Institut Universitaire de France. J.~Langou was fully supported by NSF
award CCF 1054864.

\section*{References}

\bibliography{abftSpMV}

\end{document}